\newtheorem{proposition}{Proposition}
\newtheorem{corollary}{Corollary}
\newtheorem{lemma}{Lemma}
\newtheorem{remark}{Remark}
\title{\LARGE \bf
Total energy-shaping control for mechanical systems via Control-by-Interconnection
}
\author{Joel Ferguson$^{1}$
\thanks{$^{1}$Joel Ferguson is with the School of Engineering, The University of Newcastle, Australia
        {\tt\small Email: joel.ferguson@newcastle.edu.au}}%
}
\begin{document}

\maketitle
\thispagestyle{empty}
\pagestyle{empty}

\begin{abstract}

Application of IDA-PBC to mechanical systems has received much attention in recent decades, but its application is still limited by the solvability of the so-called matching conditions. In this work, it is shown that total energy-shaping control of under-actuated mechanical systems has a control-by-interconnection interpretation. Using this interpretation, alternate matching conditions are formulated that defines constraints on the added energy, rather then the total closed-loop energy. It is additionally shown that, for systems that are under-actuated degree one with the mass matrix depending on a single coordinate, the kinetic energy matching conditions resolve to ODEs which can be evaluated numerically. Using this approach controllers are proposed for the benchmark cart-pole and acrobot systems.

\end{abstract}

\section{INTRODUCTION}
Energy-based methods for controlling nonlinear physical systems have been shown to be effective in a variety of physical domains \cite{Ortega2004}. Such methods consider the energy and structure of the system to be controlled to derive control strategies that exploit the natural system behaviours. Interconnection and Damping Assignment, Passivity-Based Control (IDA-PBC) is one such control methodology where the control input is designed such that the closed-loop can be interpreted as an alternate physical system with a different energy, interconnection and damping structure \cite{Ortega2002a}.

While IDA-PBC has been applied to a broad range of systems, particular attention has been given to mechanical systems which exhibit a rich canonical structure \cite{Gomez-Estern2001,Acosta2005,Mahindrakar2010}. In the case of fully-actuated systems, IDA-PBC allows a user to arbitrarily modify the potential and kinetic energy of the closed-loop system \cite{Arpenti2022}, a process known as total energy shaping \cite{Gomez-Estern2001}. For under-actuated systems, however, application of IDA-PBC is limited by solutions that satisfy a set of PDEs, the so-called matching conditions. Much research effort has been committed to solving these equations, with solutions posed in several special cases \cite{Acosta2005,Mahindrakar2010,Arpenti2022}. This design methodology has been applied to a number of benchmark examples such as the cart-pole, acrobot, spider crane, amongst others.

Control-by-Interconnection (CbI) describes a sub-class of energy-based control methods that falls under the umbrella of IDA-PBC \cite{Ortega2008,Ortega2014}. Under this scheme, the controller is assumed to be a passive system that is interconnected with a passive plant to be controlled via the passive input-output pair. Casimirs, conserved quantities between the control sub-system and the plant, can be constructed to help shape the energy of the closed-loop system. It is known that potential energy shaping of fully-actuated mechanical systems falls into the class of CbI \cite{Ortega2008,Duindam2009}. Control of underactuated mechanical systems has been explored in the context of CbI by applying nonlinear PID controllers to both the standard and alternate passive outputs \cite{Donaire2016c,Romero2016,Romero2017a}. The idea of using PID for stabilisation of passive systems was formalised in \cite{Zhang2017} and a general characterisation of all passive outputs from a given system was characterised.

In this work, the connection between IDA-PBC and CbI for under-actuated mechanical systems is explored. Using the bond graph formalism (see \cite{Gawthrop2007} for introduction), a control sub-system is proposed that allows for shaping of the kinetic and potential energies of the closed-loop system. By representing the controller as a passive interconnection, the requisite matching conditions are reformulated in terms of the added mass and added potential energy. Equivalence between the CbI and IDA-PBC is then established in the case of mechanical systems by identifying Casimirs relating the controller states to those of the plant. Finally, using the reformulated matching conditions, it is shown that in the case that the mass matrix depends on only one coordinate that the kinetic energy matching conditions can be formulated as an ODE that can be evaluated numerically for implementation.

%
%
%

\noindent {\bf Notation.} Function arguments are declared upon definition and are omitted for subsequent use. $0_{n\times m}$ denotes a $n\times m$ zeros matrix whereas $I_n$ denotes a $n\times n$ identity matrix. For mappings $\mathcal{H}:\mathbb{R}^n \to \mathbb{R}$, we denote the transposed gradient as $\nabla\mathcal{H}:= \left(\frac{\partial \mathcal{H}}{\partial x}\right)^\top$. For $P = P^\top \in \mathbb{R}^{n \times n}$, $\lambda_{min}\left[P\right], \lambda_{max}\left[P\right]$ denotes the minimum and maximum (real) eigenvalues of P, respectively.

\section{BACKGROUND AND PROBLEM FORMULATION}\label{sec:background}
In this section a number of key concepts necessary for the subsequent developments are briefly revised.

\subsection{Control-by-interconnection}
In this work we consider input-state-output port-Hamiltonian systems (ISO-PHS) of the form
\begin{equation}\label{cbi:plant}
	\begin{split}
		\dot x_p
		&=
		F_p(x_p)\nabla_{x_p} H_p(x_p) + G_p(x_p)u_p \\
		y_p 
		&=
		G_p^\top(x_p)\nabla_{x_p} H_p(x_p)
	\end{split}
\end{equation}
where $x_p\in\mathbb{R}^p$ is the state of the plant, $F_p(x_p)\in\mathbb{R}^{p\times p}$ is the combined interconnection and damping matrix satisfying $F_p(x_p) + F_p^\top(x_p) \leq 0$, $H_p(x_p) \in\mathbb{R}$ is the Hamiltonian, $u_p\in\mathbb{R}^m$ is the input, $G_p(x_p)\in\mathbb{R}^{p\times m}$ is the input mapping matrix and $y_p\in\mathbb{R}^m$ is the natural passive output corresponding to the input $u_p$.

CbI assumes that the controller is a passive system that is interconnected with the plant \eqref{cbi:plant} via a passive interconnection. In this work, we consider a controller subsystem with two input-output ports, described by the ISO-PHS
\begin{equation}\label{cbi:controller}
		\begin{bmatrix}
			\dot x_c \\
			-y_{c1} \\
			-y_{c2}
		\end{bmatrix}
		=
		\underbrace{
		\begin{bmatrix}
			K_{11}(x_c) & K_{12}(x_c) & K_{13}(x_c) \\
			K_{21}(x_c) & K_{22}(x_c) & K_{23}(x_c) \\
			K_{31}(x_c) & K_{32}(x_c) & K_{33}(x_c)
		\end{bmatrix}
		}_{:= K(x_c)}
		\begin{bmatrix}
			\nabla_{x_c} H_c \\
			u_{c1} \\
			u_{c2}
		\end{bmatrix}
\end{equation}
where $x_c\in\mathbb{R}^c$ is the state of the controller, $H_c(x_c)\in\mathbb{R}$ is the controller Hamiltonian, $u_{c1}, y_{c1}\in\mathbb{R}^m$ and $u_{c2}, y_{c2}\in\mathbb{R}^r$ are passive input-output pairs and $K(x_c)\in\mathbb{R}^{(p+m+r)\times(p+m+r)}$ satisfies $K(x_c) + K^\top(x_c) \leq 0$ \cite{Schaft2017}.

The controller system \eqref{cbi:controller} can be interconnected with the plant \eqref{cbi:plant} via the passive interconnection
\begin{equation}
	\begin{split}
		u_p &= -y_{c1} \\
		u_{c1} &= y_p,
	\end{split}
\end{equation}
resulting in the closed-loop dynamics
\small
\begin{equation}\label{cbi:closedLoop}
	\begin{split}
		\begin{bmatrix}
			\dot x_p \\
			\dot x_c \\
			-y_{c2}
		\end{bmatrix}
		&=
		\underbrace{
		\begin{bmatrix}
			F_p+G_pK_{22}G_p^\top & G_pK_{21} & G_pK_{23} \\
			K_{12}G_p^\top & K_{11} & K_{13} \\
			K_{32}G_p^\top & K_{31} & K_{33} \\
		\end{bmatrix}
		}_{F_{cl}}
		\begin{bmatrix}
			\nabla_{x_p} H_p \\
			\nabla_{x_c} H_c \\
			u_{c2}
		\end{bmatrix}
	\end{split}
\end{equation}
\normalsize
where $u_{c2}, y_{c2}$ is a passive input-output pair to the interconnected system.
Noting that $K + K^\top \leq 0$, the closed-loop interconnection and damping structure $F_{cl}$ satisfies $F_{cl} + F_{cl}^\top \leq 0$ also.

In the case of stabilisation, the objective is to construct the plant functions $H_c(x_c)$, $K(x_c)$ to ensure the existence of Casimirs which statically relate the controller states to functions of the plant states
\begin{equation}\label{cbi:constraint}
	x_c = f_c(x_p),
\end{equation}
for $f_c(x)\in\mathbb{R}^c$. The Casimir functions and controller initial conditions are then designed to assign a desirable minimum to the total energy function
\begin{equation}
	W(x_p) = H(x_p) + H_c(x_c)|_{x_c = f_c(x_p)}.
\end{equation}
It is noted that the Lyapunov candidate $W(x_p)$ can be generalised to a function of $H, H_c$ and the Casimirs $x_c - f_c(x_p)$ \cite{Schaft2017}. Methods to ensure the existence of and constructing Casimirs have been reported in \cite{Ortega2008} and the references therein.

\subsection{Underactuated mechanical systems}
The primary objective of this work is to apply CbI to the class of underactuated mechanical systems, described by the dynamics
\begin{equation}\label{OLsystem}
	\begin{split}
		\begin{bmatrix}
			\dot q \\
			\dot p
		\end{bmatrix}
		&=
		\begin{bmatrix}
			0_{n\times n} & I_n \\
			-I_n & 0_{n\times n}
		\end{bmatrix}
		\begin{bmatrix}
			\nabla_q H \\
			\nabla_p H
		\end{bmatrix}
		+
		\begin{bmatrix}
			0_{n\times m} \\ G
		\end{bmatrix}
		u \\
		H(q,p)
		&=
		\underbrace{\frac12 p^\top M^{-1}(q)p}_{:=T(q,p)} + V(q) \\
		y
		&=
		G^\top\nabla_p H,
	\end{split}
\end{equation}
where $q\in\mathbb{R}^n, p\in\mathbb{R}^n$ are the configuration and momentum vectors, respectively, $u\in\mathbb{R}^m$ in the input, $M(q) = M^\top(q) > 0$ is the inertia matrix and $y$ is the natural passive output corresponding the the input $u$. The input mapping matrix $G$ is assumed to be constant and have the structure
\begin{equation}\label{Gdef}
	G
	=
	\begin{bmatrix}
		I_m \\ 0_{(n-m)\times m}
	\end{bmatrix},
\end{equation}
where $n-m < n$ is the degree of underactuation of the system\footnote{The assumed structure of $G$ requires that the first $m$ configuration coordinates are chosen to be collocated with the actuators. This class of dynamics falls into the broader class of ISO-PHS \eqref{cbi:plant}. For a more general input mapping matrix $\bar G(q)\in\mathbb{R}^{n\times m}$, there exists a change of coordinates recovering the structure \eqref{Gdef} if the columns of $\bar G(q)$ are involute.}. The Hamiltonian $H(q,p)$ is the sum of the kinetic energy $T(q,p)$ and the potential energy $V(q)$, which allows the gradient of $H$ with respect to $q$ to be written as
\begin{equation}
	\nabla_q H(q,p)
	=
	\nabla_q T(q,p) + \nabla_q V(q).
\end{equation}
A full-rank left-annihilator for the input mapping matrix \eqref{Gdef} is defined as
\begin{equation}
	G^\perp
	=
	\begin{bmatrix}
		0_{(n-m)\times m} & I_{(n-m)}
	\end{bmatrix}
\end{equation}
which satisfies $G^\perp G = 0_{(n-m)\times m}$.

In the subsequent development, we will require an alternate representation of the gradient of the kinetic energy with respect to configuration $\nabla_q T(q,p)$. Noting that the kinetic energy is quadratic in $p$, the gradient $\nabla_q T(q,p)$ can always be factored into the form
\begin{equation}
	\nabla_q T(q,p) = E(q,p)M^{-1}(q)p,
\end{equation}
for some matrix $E(q,p)\in\mathbb{R}^{n\times n}$. This has been previously noted in \cite{Reyes-Baez2017} using the Christoffel  symbols. Note, however, that the matrix $E(q,p)$ is non-unique and in this work we will use the representation given by
\begin{equation}\label{Edef}
	\begin{split}
		\nabla_q T(q,p)
		&=
		\frac12\frac{\partial^\top}{\partial q}\left(M^{-1}(q)p\right)p \\
		&=
		\underbrace{\frac12\frac{\partial^\top}{\partial q}\left(M^{-1}(q)p\right)M(q)}_{:=E(q,p)}M^{-1}(q)p.
	\end{split}
\end{equation}

In constructing a CbI interpretation to total energy shaping it is useful to define a virtual input-output pair for the system \eqref{OLsystem} by defining the input
\begin{equation}\label{uvDef}
	u_v = Gu,
\end{equation}
which allows the system to written similarly to a fully-actuated system as
\begin{equation}\label{OLsystem_alt}
	\begin{split}
		\begin{bmatrix}
			\dot q \\
			\dot p
		\end{bmatrix}
		&=
		\begin{bmatrix}
			0_{n\times n} & I_n \\
			-I_n & 0_{n\times n}
		\end{bmatrix}
		\begin{bmatrix}
			\nabla_q H \\
			\nabla_p H
		\end{bmatrix}
		+
		\begin{bmatrix}
			0_{n\times n} \\ I_n
		\end{bmatrix}
		u_v \\
		y_v
		&=
		\nabla_p H,
	\end{split}
\end{equation}
where $u_v, y_v\in\mathbb{R}^n$.
From the definition \eqref{uvDef}, it is clear that any input $u_v$ must satisfy
\begin{equation}\label{matchingCondition}
	G^\perp u_v = 0_{m\times 1},
\end{equation}
which will be ensured in subsequent control design. Assuming that \eqref{matchingCondition} holds, the input $u$ can be described as a function of $u_v$ by
\begin{equation}\label{uDef_uv}
	\begin{split}
		u
		&=
		G^\top u_v.
	\end{split}
\end{equation}
The advantage of constructing the virtual input-output pair is that the virtual output now describes the full velocity vector
\begin{equation}
	y_v = M^{-1}(q)p = \dot q,
\end{equation}
a property that will be exploited when shaping the potential energy.

\subsection{IDA-PBC for underactuated mechanical systems}
IDA-PBC is a control design methodology whereby the control signal is designed such that the closed-loop dynamics have a port-Hamiltonian (pH) structure. When applied to underactuated mechanical systems, the target closed-loop dynamics have the structure
\begin{equation}\label{IDAPBCsystem}
	\begin{split}
		\begin{bmatrix}
			\dot q \\
			\dot p
		\end{bmatrix}
		&=
		\begin{bmatrix}
			0_{n\times n} & M^{-1}(q)M_d(q) \\
			-M_d(q)M^{-1}(q) & J_2(q,p)-GK_dG^\top
		\end{bmatrix}
		\begin{bmatrix}
			\nabla_q H_d \\
			\nabla_p H_d
		\end{bmatrix} \\
		H_d&(q,p)
		=
		\frac12 p^\top M_d^{-1}(q)p + V_d(q),
	\end{split}
\end{equation}
where $M_d(q) = M_d^\top(q) > 0, V_d(q)$ are the desired closed-loop inertia matrix and potential energies, respectively, $J_2(q,p) = -J_2^\top(q,p)$ is skew-symmetric and $K_d = K_d^\top \geq 0$ is a tuning parameter used for damping injection. If $V_d$ is minimised at the target configuration, $H_d$ qualifies as a Lyapunov function for the closed-loop system \cite{Acosta2005}.

The complexity of applying IDA-PBC to underactuated system is satisfying the so-called matching conditions. This conditions requires that the dynamics of the open-loop and closed-loop systems must agree on the spaces perpendicular to the control signal. The structure chosen fo the closed-loop system in \eqref{IDAPBCsystem} ensures that the dynamics of $q$ agree with \eqref{OLsystem}. Comparing the dynamics of $p$ results in the condition
\begin{equation}
	\begin{split}
		G^\perp 
		\left\lbrace
		\nabla_q H-M_d(q)M^{-1}(q)\nabla_q H_d - J_2(q,p)\nabla_p H_d
		\right\rbrace \\
		=
		0_{(n-m)\times 1},
	\end{split}
\end{equation}
which defines a PDE that should be solved for $M_d(q), V_d(q), J_2(q,p)$. Noting the structure of the Hamiltonians, this PDE can be separated into the components involving $p$, and those that do not by
\begin{equation}\label{matchingEquations}
	\begin{split}
		G^\perp 
		\left\lbrace
			\nabla_q T-M_d(q)M^{-1}(q)\nabla_q T_d - J_2(q,p)M_d^{-1}(q)p
		\right\rbrace \\
		=
		0_{(n-m)\times 1} \\
		G^\perp 
		\left\lbrace
			\nabla_q V-M_d(q)M^{-1}(q)\nabla_q V_d
		\right\rbrace \\
		=
		0_{(n-m)\times 1}.
	\end{split}
\end{equation}
These expressions are known as the kinetic energy and potential energy matching equations, respectively.

\subsection{Contributions}
The objective of this work is to construct a CbI interpretation of IDA-PBC when applied to underactuated mechanical systems. The contributions of this work are threefold:
\begin{enumerate}[label=\textbf{C.\arabic*}]
\item ISO-PHS with Casimirs that statically relate states are considered and a closed-form solution to remove the Casimirs by reducing the dimension of the state vector is proposed. This solution can be applied to the closed-loop dynamics of CbI implementations of the form \eqref{cbi:closedLoop} to describe the resulting dynamics as a function of $x_p$ only.
\item A CbI controller for underactuated mechanical system of the form \eqref{cbi:controller} is proposed and the resulting closed-loop is shown to be equivalent to the well-known dynamics \eqref{IDAPBCsystem}. The CbI interpretation generates alternate matching conditions to the expressions \eqref{matchingEquations}, describing constraints on the added mass and added potential energy.
\item Using the alternate matching conditions, it is shown that the kinetic energy matching equations reduce to ODEs in the special case of underactuation degree one where the mass matrix is a function of only one configuration coordinate. It is demonstrated that numerical methods can be utilised in such cases to avoid solving these expressions analytically.
\end{enumerate}

\subsection{Related works}
Significant attention has been given to solving the matching equations \eqref{matchingEquations} in recent decades. In \cite{Gomez-Estern2001}, \cite{Ortega2002} it was shown that is the system is under-actuated degree one and the mass matrix depends on a single un-actuated coordinate, the kinetic energy matching condition can be simplified to an ODE. Using a novel parametrisation of $J_2(q,p)$, a general solution for under-actuated degree one system was proposed in \cite{Acosta2005} under the assumption that the mass matrix depends only on the actuated coordinates. This approach was extended in \cite{Viola2007} using a momentum transformation to simplify the matching equations. More recently, solutions to the potential energy matching equations were considered in \cite{Ryalat2016} under the assumption that the mass matrix and potential energy functions were dependent on only one variable. Finally, a general solution to the special case of 2 degree-of-freedom system was proposed in \cite{Arpenti2022}. The studies \cite{Gomez-Estern2004}, \cite{Sandoval2010} considered the effects of friction on the closed-loop stability using IDA-PBC.

In recent works, alternate approaches to constructing solutions to the matching equations have been explored. The existence of conservative forces that cannot be factorised into a skew-symmetric matrix $J_2(q,p)$ was investigated in \cite{Donaire2016b}, which resulted in alternate matching equations. Implicit system representations were used in \cite{Cieza2019} to construct solutions in an over-parameterised space where the closed-loop dynamics were subject to constraints. By working in the larger dimension, a solution to a under-actuated degree 2 crane system was proposed. Pfaffian differential equations were utilised in \cite{Harandi2021} which resulted in the kinetic energy PDEs being converted to an alternate form which admits simpler solutions.

Some authors have investigated the possibility of avoiding the matching equations altogether by considering the control signal to be a CbI. The work \cite{Donaire2016c} relied on a Lagrangian structure and several technical assumptions to verify the existence of a second passive output corresponding to the input $u$. Using this second output, a stabilising control was designed that ensured stability without requiring a solution to the matching PDEs. A similar approach was proposed in \cite{Romero2017a,Romero2016} where a second passive output was utilised and the control assumed to have a PID structure.

\section{CASIMIR REDUCTION}\label{sec:Casimir}
In this section, a method for reducing the state dimension of ISO-PHS with Casimirs is derived. The reduction method applies to general ISO-PHS with Casimirs and can be directly applied to the resulting closed-loop dynamics of CbI schemes of the form \eqref{cbi:closedLoop} to describe the system as a function of $x_p$ only. In the sequel, the reduction method will be used to show equivalence between the CbI controller for underactuated mechanical systems and the IDA-PBC dynamics \eqref{IDAPBCsystem}. Before introducing the state reduction solution, a useful lemma is required.

\begin{lemma}\label{lem:schurNegDef}
	Consider a square block matrix of arbitrary dimension
	\begin{equation}
		A
		=
		\begin{bmatrix}
			A_{11} & A_{12} \\
			A_{21} & A_{22}
		\end{bmatrix}
	\end{equation}
	and assume that $A_{22}$ is invertible.
	If the symmetric component of $A$ is negative semi-definite, $A + A^\top \leq 0$, the symmetric component of the Schur complement 
	\begin{equation}
		A_{11} - A_{12}A_{22}^{-1}A_{21}
	\end{equation}
	is negative semi-definite also.
\end{lemma}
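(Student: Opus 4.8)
The plan is to reduce the claim to the elementary characterisation that a real square matrix $B$ satisfies $B + B^\top \le 0$ if and only if $v^\top B v \le 0$ for every real vector $v$. So it suffices to show that $v^\top S v \le 0$ for all $v$ of the appropriate dimension, where $S := A_{11} - A_{12}A_{22}^{-1}A_{21}$ denotes the Schur complement.

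First I would fix an arbitrary vector $v$ and, using that $A_{22}$ is invertible, form the ``completed'' vector $x := \begin{bmatrix} v \\ w \end{bmatrix}$ with $w := -A_{22}^{-1}A_{21}v$. Expanding the quadratic form of $A$ gives $x^\top A x = v^\top A_{11}v + v^\top A_{12}w + w^\top A_{21}v + w^\top A_{22}w$, and the substitution $w = -A_{22}^{-1}A_{21}v$ makes the last two terms collapse: $w^\top A_{21}v + w^\top A_{22}w = w^\top(A_{21}v + A_{22}w) = w^\top(A_{21}v - A_{21}v) = 0$. Hence $x^\top A x = v^\top A_{11}v - v^\top A_{12}A_{22}^{-1}A_{21}v = v^\top S v$.

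Next I would invoke the hypothesis $A + A^\top \le 0$: since $x^\top A x = \tfrac12 x^\top(A + A^\top)x$ for any real $x$, we get $x^\top A x \le 0$, and therefore $v^\top S v \le 0$. Because $v$ was arbitrary, $v^\top(S + S^\top)v = 2\,v^\top S v \le 0$ for all $v$, i.e. $S + S^\top \le 0$, which is the assertion.

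There is no genuinely hard step; the only subtlety worth flagging is that $A_{22}$ is assumed merely invertible, not to have negative-definite symmetric part (indeed $A_{22} + A_{22}^\top \le 0$ does not imply invertibility, and invertibility does not imply $A_{22} + A_{22}^\top < 0$), so the textbook Schur-complement manipulation that factors the \emph{symmetric} part of $A$ is not directly available; evaluating the quadratic form of $A$ itself on the completed vector $x$ sidesteps this. Equivalently — an alternative way to organise the argument — one applies the real congruence $T^\top A T$ with $T := \begin{bmatrix} I & 0 \\ -A_{22}^{-1}A_{21} & I \end{bmatrix}$, checks that $T^\top A T$ is block lower-triangular with $(1,1)$ block equal to $S$ and $(2,1)$ block zero, and observes that $S + S^\top$ is then the $(1,1)$ principal block of $T^\top(A + A^\top)T$, which is negative semi-definite as a congruence of $A + A^\top$; since any principal submatrix of a negative semi-definite matrix is negative semi-definite, this yields $S + S^\top \le 0$.
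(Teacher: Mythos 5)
Your proposal is correct and is essentially the paper's argument: evaluating $x^\top A x$ at the completed vector $x = \bigl[\,v^\top \;\; (-A_{22}^{-1}A_{21}v)^\top\,\bigr]^\top$ is exactly the paper's identity $\begin{bmatrix} I & -A_{21}^\top A_{22}^{-\top}\end{bmatrix} A \begin{bmatrix} I \\ -A_{22}^{-1}A_{21}\end{bmatrix} = A_{11}-A_{12}A_{22}^{-1}A_{21}$ written vector-by-vector, followed by the same appeal to $A + A^\top \le 0$. The congruence variant you sketch at the end is just the matrix-level form of the same computation, so no substantive difference from the paper's proof.
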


\begin{proof}
	First note that the Schur complement of $X$ can be computed by
	\begin{equation}
		\begin{split}
			\begin{bmatrix}
				I & -A_{21}^\top A_{22}^{-\top}
			\end{bmatrix}
			A
			\begin{bmatrix}
				I \\ -A_{22}^{-1}A_{21}
			\end{bmatrix}
			=
			A_{11} - A_{12}A_{22}^{-1}A_{21}.
		\end{split}
	\end{equation}
	The symmetric component of this expression is negative semi-definite as $A + A^\top \leq 0$.
\end{proof}

The solution for reducing the dimension of ISO-PHS which exhibit Casimirs is now introduced. This development applies to systems of the form
\begin{equation}\label{Casimir:fullSystem}
	\begin{split}
		\begin{bmatrix}
			\dot x_1 \\ 
			\dot x_2 \\
			-y
		\end{bmatrix}
		&=
		\underbrace{
		\begin{bmatrix}
			F_{11}(x) & F_{12}(x) & F_{13}(x) \\
			F_{21}(x) & F_{22}(x) & F_{23}(x) \\
			F_{31}(x) & F_{32}(x) & F_{33}(x)
		\end{bmatrix}
		}_{F(x)}
		\begin{bmatrix}
			\nabla_{x_1} H \\
			\nabla_{x_2} H \\
			u
		\end{bmatrix}
	\end{split}
\end{equation}
where $x\in\mathbb{R}^{p+c}$ is the state of the system which has been partitioned into $x_1\in\mathbb{R}^p, x_2\in\mathbb{R}^c$, $H(x_1,x_2)\in\mathbb{R}$ is the Hamiltonian, $F(x)\in\mathbb{R}^{(p+c)\times(p+c)}$ is the full-rank interconnection and damping matrix satisfying $F(x) + F^\top(x) \leq 0$, $u\in\mathbb{R}^m$ is the input and $y\in\mathbb{R}^m$ is the corresponding passive output. It is assumed that the system contains a Casimir and the states have been partitioned such that the Casimir can be written as
\begin{equation}\label{CasimirDef}
	x_2 = f_c(x_1),
\end{equation}
where $f_c(x_1)\in\mathbb{R}^c$ is differentiable.

The first step in constructing a minimal system representation is defining a new set of coordinates given by
\begin{equation}\label{Casimir:wDef}
	w = x_2 - f_c(x_1) = 0_{c\times 1},
\end{equation}
which is identically equal to zero by construction.
The system \eqref{Casimir:fullSystem} can be described in the coordinates $(x_1, w)$ by
\begin{equation}\label{systemTransformed}
	\begin{split}
		\begin{bmatrix}
			\dot x_1 \\
			-y \\
			\dot w
		\end{bmatrix}
		=&
		\begin{bmatrix}
			I_p & 0_{p\times c} & 0_{p\times m} \\
			0_{m\times p} & 0_{m\times c} & I_m \\
			-\frac{\partial f_c}{\partial x_1} & I_c & 0_{c\times m}
		\end{bmatrix}
		\begin{bmatrix}
			\dot x_1 \\
			\dot x_2 \\
			-y
		\end{bmatrix} \\
		=&
		\begin{bmatrix}
			I_p & 0_{p\times c} & 0_{p\times m} \\
			0_{m\times p} & 0_{m\times c} & I_m \\
			-\frac{\partial f_c}{\partial x_1} & I_c & 0_{c\times m}
		\end{bmatrix}
		F \\
		&\times
		\begin{bmatrix}
			I_p & 0_{p\times m} & -\frac{\partial^\top f_c}{\partial x_1} \\
			0_{c\times p} & 0_{c\times m} & I_c \\
			0_{m\times p} & I_m & 0_{m\times c}
		\end{bmatrix}
		\begin{bmatrix}
			\nabla_{x_1} H_r \\
			u \\
			\nabla_{w} H_r
		\end{bmatrix} \\
		=&
		\underbrace{
		\begin{bmatrix}
			\bar F_{11} & \bar F_{12} & \bar F_{13} \\
			\bar F_{21} & \bar F_{22} & \bar F_{23} \\
			\bar F_{31} & \bar F_{32} & \bar F_{33}
		\end{bmatrix}
		}_{\bar F}
		\begin{bmatrix}
			\nabla_{x_1} H_r \\
			u \\
			\nabla_{w} H_r
		\end{bmatrix}, \\
	\end{split}
\end{equation} 
where
\begin{equation}
	\begin{split}
		H_r(x_1, w) :=& H(x_1, w + f_c(x_1)) \\
		\bar F_{11}(x_1)
		=&
		F_{11}(x)|_{x_2 = f_c(x_1)} \\
		\bar F_{12}(x_1)
		=&
		F_{13}(x)|_{x_2 = f_c(x_1)} \\
		\bar F_{13}(x_1)
		=&
		F_{12}(x)-F_{11}(x)\frac{\partial^\top f_c}{\partial x_1}\bigg|_{x_2 = f_c(x_1)} \\
		\bar F_{21}(x_1)
		=&
		F_{31}(x)|_{x_2 = f_c(x_1)} \\
		\bar F_{22}(x_1)
		=&
		F_{33}(x)|_{x_2 = f_c(x_1)} \\
		\bar F_{23}(x_1)
		=&
		F_{32}(x)-F_{31}(x)\frac{\partial^\top f_c}{\partial x_1}\bigg|_{x_2 = f_c(x_1)} \\
		\bar F_{31}(x_1)
		=&
		F_{21}(x)-\frac{\partial f_c}{\partial x_1}F_{11}(x)\bigg|_{x_2 = f_c(x_1)} \\
		\bar F_{32}(x_1)
		=&
		F_{23}(x)-\frac{\partial f_c}{\partial x_1}F_{13}(x)\bigg|_{x_2 = f_c(x_1)} \\
		\bar F_{33}(x_1)
		=&
		F_{22}(x)-F_{21}(x)\frac{\partial^\top f_c}{\partial x_1} - \frac{\partial f_c}{\partial x_1}F_{12}(x) \\
		&+ \frac{\partial f_c}{\partial x_1}F_{11}(x)\frac{\partial^\top f_c}{\partial x_1}\bigg|_{x_2 = f_c(x_1)}.
	\end{split}
\end{equation}

Recalling that $w$ is identically equal to zero, $\dot w$ is also equal to zero. Consequently, the final row of \eqref{systemTransformed} is a constraint that needs to be resolved to construct a minimal system representation. There are two methods of reduction that will be considered. Firstly, in the transformed coordinates \eqref{systemTransformed} it can occur that one or more columns of $\bar F_{\star3}$ is identically zero. Without loss of generality, it is assumed that the first $d$ columns of $\bar F_{\star3}$ are equal to zero. To remove the zero rows, the full-rank matrix $B$ is defined as
\begin{equation}\label{Bdef}
	B
	=
	\begin{bmatrix}
		0_{d\times (c-d)} \\
		I_{(c-d)}
	\end{bmatrix},
\end{equation}
which acts to select the non-zero columns of $\bar F_{\star3}$. The zero columns and corresponding rows are removed from \eqref{systemTransformed} by
\begin{equation}\label{systemTransformedReduced}
	\begin{split}
		\begin{bmatrix}
			\dot x_1 \\
			-y \\
			B^\top\dot w
		\end{bmatrix}
		=&
		\underbrace{
		\begin{bmatrix}
			\bar F_{11} & \bar F_{12} & \bar F_{13}B \\
			\bar F_{21} & \bar F_{22} & \bar F_{23}B \\
			B^\top\bar F_{31} & B^\top\bar F_{32} & B^\top\bar F_{33}B
		\end{bmatrix}
		}_{:= \bar F_B}
		\begin{bmatrix}
			\nabla_{x_1} H_r \\
			u \\
			B^\top\nabla_{w} H_r
		\end{bmatrix},
	\end{split}
\end{equation} 
which does not modify the system dynamics. Note that as $\bar F + \bar F^\top \leq 0$, $\bar F_B + \bar F_B^\top \leq 0$ also. Using this representation, a method for resolving the remaining constraint equations is presented under the assumption that $B^\top\bar F_{33}B$ is full rank.

\begin{proposition}\label{prop:casimir}
	Consider the pH system \eqref{Casimir:fullSystem} with Casimir \eqref{CasimirDef}. If the matrix $B^\top\bar F_{33}B$
	is full-rank for all $x_1$ the system can be described by a reduced-order model
	\begin{equation}\label{casimir:ReducedDynamics}
		\begin{split}
			\begin{bmatrix}
				\dot x_1 \\
				-y
			\end{bmatrix}
			&=
			F_r(x_1)
			\begin{bmatrix}
				\nabla_{x_1}H_r \\
				u
			\end{bmatrix},
		\end{split}
	\end{equation}
	where
	\begin{equation}\label{casimir:reducedMats}
		\begin{split}
			H_r(x_1)
			=&
			H\left(x_1, f_c(x_1)\right) \\
			F_r(x_1)
			=&
			\begin{bmatrix}
				\bar F_{11} & \bar F_{12} \\
				\bar F_{21} & \bar F_{22} \\
			\end{bmatrix} \\
			&-
			\begin{bmatrix}
				\bar F_{13}B \\
				\bar F_{23}B \\
			\end{bmatrix}
			(B^\top\bar F_{33}B)^{-1}
			\begin{bmatrix}
				B^\top\bar F_{31} & B^\top\bar F_{32}
			\end{bmatrix}
		\end{split}
	\end{equation}
	and $\bar F(x_1)$ satisfies $F_r(x_1) + F_r^\top(x_1) \leq 0$.
\end{proposition}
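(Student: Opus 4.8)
The plan is to eliminate the constraint contained in the reduced representation \eqref{systemTransformedReduced} by a Schur-complement argument, and then read off the dissipation inequality from Lemma~\ref{lem:schurNegDef}. Since the coordinate $w$ is identically zero by construction, $\dot w = 0$ and hence $B^\top \dot w = 0$, so the last block-row of \eqref{systemTransformedReduced} is the algebraic relation
\begin{equation*}
	0 = B^\top \bar F_{31}\,\nabla_{x_1} H_r + B^\top \bar F_{32}\,u + \left( B^\top \bar F_{33} B \right)\!\left( B^\top \nabla_{w} H_r \right).
\end{equation*}
Under the standing hypothesis that $B^\top \bar F_{33} B$ is full-rank, this is solved for the quantity $B^\top \nabla_{w} H_r$, giving $B^\top \nabla_{w} H_r = -\left( B^\top \bar F_{33} B \right)^{-1}\!\left( B^\top \bar F_{31}\,\nabla_{x_1} H_r + B^\top \bar F_{32}\,u \right)$.

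Next I would substitute this back into the first two block-rows of \eqref{systemTransformedReduced}. Because the first $d$ columns of $\bar F_{13}$ and $\bar F_{23}$ vanish, the terms $\bar F_{13}\nabla_w H_r$ and $\bar F_{23}\nabla_w H_r$ enter only through $B^\top \nabla_w H_r$, so the substitution is exact and collapses the first two rows to $\begin{bmatrix}\dot x_1 \\ -y\end{bmatrix} = F_r(x_1)\begin{bmatrix}\nabla_{x_1} H_r \\ u\end{bmatrix}$ with $F_r$ precisely the Schur complement \eqref{casimir:reducedMats}. It then remains to settle the Hamiltonian bookkeeping: setting $w=0$ in $H_r(x_1,w) = H(x_1, w + f_c(x_1))$ yields $H_r(x_1) = H(x_1, f_c(x_1))$, and a chain-rule computation shows $\nabla_{x_1}\big[H(x_1, f_c(x_1))\big] = \nabla_{x_1} H_r(x_1,w)\big|_{w=0}$, so the gradient appearing in \eqref{casimir:ReducedDynamics} is unambiguously the gradient of the reduced Hamiltonian.

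For the dissipation inequality I would invoke Lemma~\ref{lem:schurNegDef} directly. As already noted following \eqref{systemTransformedReduced}, $\bar F_B + \bar F_B^\top \leq 0$; partitioning $\bar F_B$ so that the $x_1$- and $u$-rows/columns form the $(1,1)$ block and the $B^\top w$-rows/columns form the (invertible) $(2,2)$ block $B^\top\bar F_{33}B$, the matrix $F_r$ in \eqref{casimir:reducedMats} is exactly the corresponding Schur complement. Lemma~\ref{lem:schurNegDef} then yields $F_r(x_1) + F_r^\top(x_1) \leq 0$, completing the proof.

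The computations involved are routine; the only points requiring attention are the consistency of the gradient notation upon restriction to $w = 0$ (so that $\nabla_{x_1} H_r$ in the reduced model is well defined) and the correct identification of the block partition of $\bar F_B$ when applying Lemma~\ref{lem:schurNegDef}. I do not anticipate a substantive obstacle beyond this bookkeeping.
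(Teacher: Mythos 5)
Your proposal is correct and follows essentially the same route as the paper's proof: treat the last block-row of \eqref{systemTransformedReduced} as an algebraic constraint, solve it for $B^\top\nabla_w H_r$ using invertibility of $B^\top\bar F_{33}B$, substitute into the remaining rows to obtain \eqref{casimir:ReducedDynamics}, and invoke Lemma~\ref{lem:schurNegDef} on the Schur complement of $\bar F_B$ for $F_r + F_r^\top \leq 0$. Your extra care with the chain-rule consistency of $\nabla_{x_1}H_r$ at $w=0$ and the vanishing columns of $\bar F_{\star 3}$ is sound bookkeeping that the paper leaves implicit.
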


\begin{proof}
	The expression $B^\top\dot w$, defined in \eqref{systemTransformedReduced}, is identically equal to $0_{(c-d)\times 1}$ by construction. Note, however, that the gradient $B^\top\nabla_{w} H_r$ is not necessarily equal to zero. Assuming that $B^\top\bar F_{33}B$ is full rank, the expression $B^\top \nabla_w H_r$ can be described as
	\begin{equation}
		\begin{split}
			B^\top \nabla_{w} H_r
			=&
			-(B^\top\bar F_{33}B)^{-1}
			\begin{bmatrix}
				B^\top\bar F_{31} & B^\top\bar F_{32}
			\end{bmatrix}
			\begin{bmatrix}
				\nabla_{x_1} H_r \\
				u
			\end{bmatrix}
		\end{split}
	\end{equation}
	Substituting this expression into the dynamics \eqref{systemTransformedReduced} resolves to the reduced dynamics \eqref{casimir:ReducedDynamics}.
	
	To verify that $F_r + F_r^\top \leq 0$, note that $F_r$ is the Schur complement of $\bar F_B$ which satisfies $\bar F_B + \bar F_B^\top \leq 0$. It follows that $F_r + F_r^\top \leq 0$ by application of Lemma \ref{lem:schurNegDef}.
\end{proof}

Proposition \ref{prop:casimir} showed that an ISO-PHS that exhibits a Casimir function can be described in a reduced state-space. The class of dynamics that are derived from application of CbI \eqref{cbi:closedLoop} that result in a Casimir of the form \eqref{cbi:constraint} falls into the class of systems \eqref{Casimir:fullSystem}. The following Corollary tailors the Casimir reduction for this important sub-class of dynamics.

\begin{corollary}\label{corr:casimir}
	If the closed-loop dynamics of a CbI scheme \eqref{cbi:closedLoop} exhibit a Casimir of the form \eqref{cbi:constraint}, the system can be equivalently expressed in the form \eqref{casimir:ReducedDynamics} where
	\begin{equation}\label{FbarDefs}
		\begin{split}
			x_1
			=&
			x_p \\
			H_{r}(x_p)
			=&
			H_p(x_p) + H_c\left(f_c(x_p)\right) \\
			\bar F_{11}
			=&
			F_p+G_pK_{22}G_p^\top \\
			\bar F_{12}
			=&
			G_pK_{23} \\
			\bar F_{13}
			=&
			G_pK_{21}-\left[F_p+G_pK_{22}G_p^\top\right]\frac{\partial^\top f_c}{\partial x_p} \\
			\bar F_{21}
			=&
			K_{32}G_p^\top \\
			\bar F_{22}
			=&
			K_{33} \\
			\bar F_{23}
			=&
			K_{31}-K_{32}G_p^\top\frac{\partial^\top f_c}{\partial x_p} \\
			\bar F_{31}
			=&
			K_{12}G_p^\top-\frac{\partial f_c}{\partial x_p}\left[F_p+G_pK_{22}G_p^\top\right] \\
			\bar F_{32}
			=&
			K_{13}-\frac{\partial f_c}{\partial x_p}G_pK_{23} \\
			\bar F_{33}
			=&
			K_{11}-K_{12}G_p^\top\frac{\partial^\top f_c}{\partial x_p} - \frac{\partial f_c}{\partial x_p}G_pK_{21} \\
			&+ \frac{\partial f_c}{\partial x_p}\left[F_p+G_pK_{22}G_p^\top\right]\frac{\partial^\top f_c}{\partial x_p}
		\end{split}
	\end{equation}
	and $B$ is suitably chosen as per \eqref{Bdef} using the expressions $\bar F_{\star3}$.
	The arguments have been dropped from the definitions of $\bar F_{\star\star}(x_p)$ for the sake of readability.
\end{corollary}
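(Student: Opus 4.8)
The plan is to recognise the CbI closed loop \eqref{cbi:closedLoop} as a particular instance of the generic Casimir-admitting ISO-PHS \eqref{Casimir:fullSystem} and then invoke Proposition \ref{prop:casimir} essentially verbatim. Concretely, I would identify $x_1 \leftarrow x_p$, $x_2 \leftarrow x_c$, $u \leftarrow u_{c2}$, $y \leftarrow y_{c2}$, and take the Hamiltonian of \eqref{Casimir:fullSystem} to be $H(x_1,x_2) = H_p(x_p) + H_c(x_c)$; since this Hamiltonian is separable, its partial gradients $\nabla_{x_1}H$, $\nabla_{x_2}H$ coincide with the entries $\nabla_{x_p}H_p$, $\nabla_{x_c}H_c$ appearing in the gradient vector of \eqref{cbi:closedLoop}, so the two descriptions are literally the same system. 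Reading the block entries off $F_{cl}$ gives $F_{11} = F_p + G_pK_{22}G_p^\top$, $F_{12} = G_pK_{21}$, $F_{13} = G_pK_{23}$, $F_{21} = K_{12}G_p^\top$, $F_{22} = K_{11}$, $F_{23} = K_{13}$, $F_{31} = K_{32}G_p^\top$, $F_{32} = K_{31}$, $F_{33} = K_{33}$, and the hypothesis of \eqref{Casimir:fullSystem} that the symmetric part be negative semidefinite is exactly the property $F_{cl} + F_{cl}^\top \leq 0$ already noted beneath \eqref{cbi:closedLoop} as a consequence of $K + K^\top \leq 0$. The Casimir \eqref{cbi:constraint}, $x_c = f_c(x_p)$, is precisely \eqref{CasimirDef} for this partition.

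With the correspondence in place, the remainder is substitution. The reduced Hamiltonian of Proposition \ref{prop:casimir} is $H_r(x_1) = H(x_1, f_c(x_1)) = H_p(x_p) + H_c(f_c(x_p))$, which is the expression claimed in \eqref{FbarDefs}. The matrices $\bar F_{\star\star}$ are obtained by plugging the nine blocks above into the general formulas for $\bar F_{\star\star}$ listed with \eqref{systemTransformed}; e.g.\ $\bar F_{13} = F_{12} - F_{11}\frac{\partial^\top f_c}{\partial x_1} = G_pK_{21} - [F_p + G_pK_{22}G_p^\top]\frac{\partial^\top f_c}{\partial x_p}$, and $\bar F_{33} = F_{22} - F_{21}\frac{\partial^\top f_c}{\partial x_1} - \frac{\partial f_c}{\partial x_1}F_{12} + \frac{\partial f_c}{\partial x_1}F_{11}\frac{\partial^\top f_c}{\partial x_1}$ evaluated at $x_c = f_c(x_p)$ reduces to the stated $\bar F_{33}$; the remaining seven entries follow in the same way. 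Choosing $B$ as in \eqref{Bdef} to delete any identically-zero columns of $\bar F_{\star3}$ and invoking the standing full-rank assumption on $B^\top\bar F_{33}B$ inherited from Proposition \ref{prop:casimir}, that proposition then delivers the reduced model \eqref{casimir:ReducedDynamics}--\eqref{casimir:reducedMats} together with $F_r + F_r^\top \leq 0$ (via Lemma \ref{lem:schurNegDef}), completing the argument.

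There is no serious obstacle here: the statement is a transcription of Proposition \ref{prop:casimir} into the CbI setting, and the only place where care is genuinely needed is the index bookkeeping --- making sure the external passive port $(u_{c2}, y_{c2})$ of the closed loop is matched to the $(u,y)$ slot of \eqref{Casimir:fullSystem}, and that the controller/plant block ordering is preserved when the generic $F$-blocks are specialised to $F_{cl}$. A secondary point worth making explicit in the write-up is that the corollary is asserted under the same hypotheses as Proposition \ref{prop:casimir} (in particular invertibility of $B^\top\bar F_{33}B$ for all $x_p$), since the formula \eqref{casimir:reducedMats} already presumes that inverse exists.
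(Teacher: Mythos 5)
Your proposal is correct and follows exactly the paper's route: the paper proves this corollary by direct application of Proposition \ref{prop:casimir} to the closed loop \eqref{cbi:closedLoop}, which is precisely the block identification and substitution you carry out (your explicit matching of the $F_{cl}$ blocks to $F_{\star\star}$ and the remark that invertibility of $B^\top\bar F_{33}B$ is inherited as a hypothesis simply make explicit what the paper leaves implicit).
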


\begin{proof}
	The result follows from direct application of Proposition \ref{prop:casimir} to the dynamics \eqref{cbi:closedLoop}.
\end{proof}

\section{CONTROL-BY-INTERCONNECTION FOR MECHANICAL SYSTEMS}\label{sec:CbI}
In this section, a control-by-interconnection scheme for under-actuated mechanical systems is presented. A dynamic 2-port control system is introduced with the intention that it will be interconnected to the plant \eqref{OLsystem_alt} via one of the ports. The controller states are constructed to be statically related to the plant states after interconnection, resulting in Casimirs. By applying Proposition \ref{prop:casimir}, the closed-loop dynamic are defined in a reduced space in which the dynamics coincide with standard total energy-shaping control \eqref{IDAPBCsystem}.

The proposed CbI scheme is shown in Figure \ref{underactuatedMassShaping}. The intention of this control subsystem is to interconnect with the plant \eqref{OLsystem_alt} via the $u_{c1}, y_{c1}$ power port. 
\begin{figure}[h!]
	\centering
	\includegraphics[width=1.0\columnwidth]{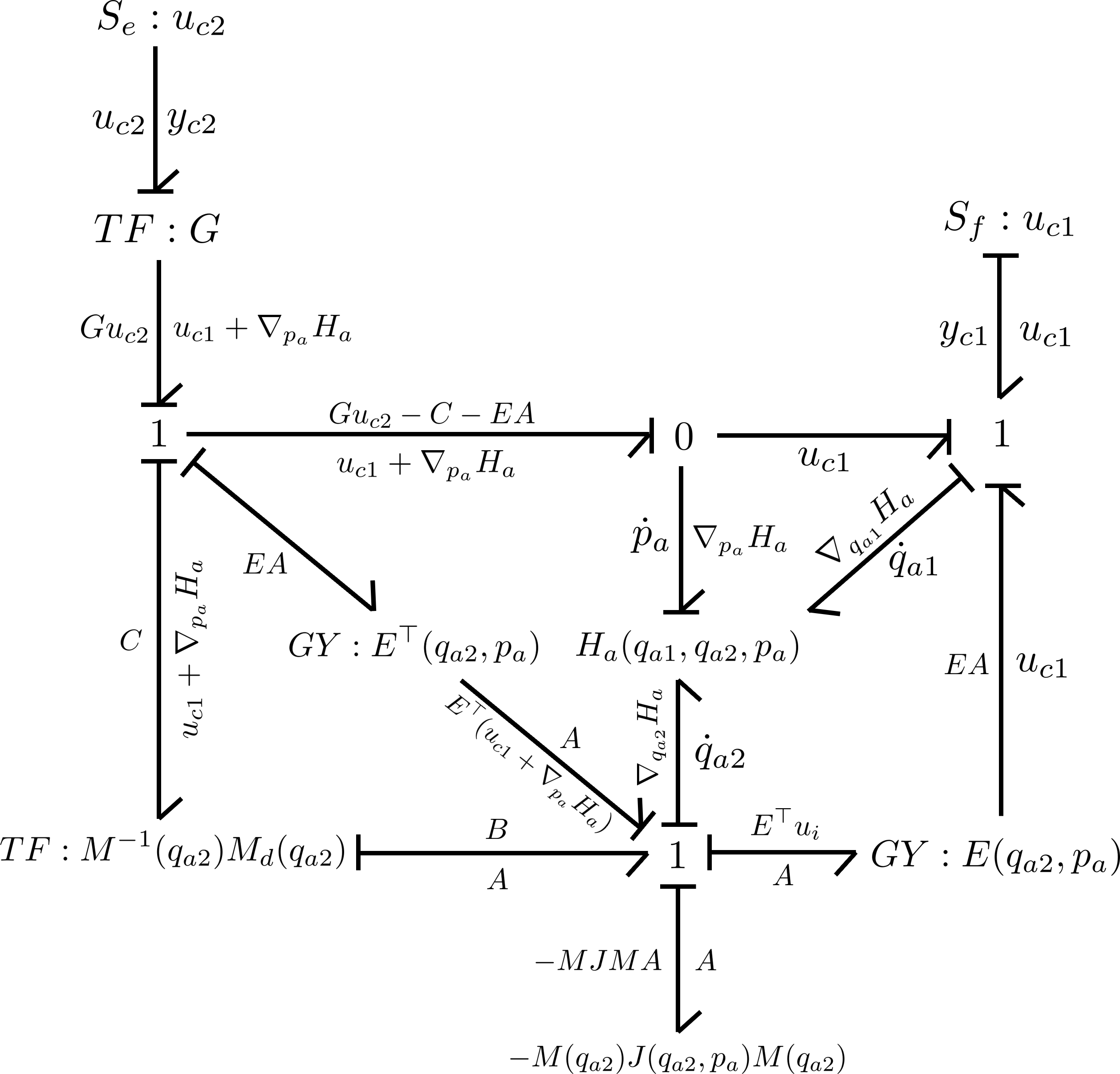}
	\caption{Total energy shaping as a CbI for under-actuated mechanical systems.}
	\label{underactuatedMassShaping}
\end{figure}
The second input $u_{c2}$ is available for subsequent control design, such as damping injection. The terms $M(q_{a_2}), E(q_{a2},p_a)$ are the plant mass matrix \eqref{OLsystem} and factorisation of the kinetic energy gradient \eqref{Edef} evaluated at the controller states whereas $J(q_{a2},p_a) = -J(q_{a2},p_a)^\top\in\mathbb{R}^{n\times n}$ is a skew-symmetric matrix to be chosen. The three-port storage element $H_a(q_{a1},q_{a2},p_a)$ has states $q_{a1}, q_{a2}, p_a\in\mathbb{R}^n$ and energy function similar to mechanical systems,
\begin{equation}\label{HaDef}
	\begin{split}
		H_a(q_{a1},q_{a2},p_a)
		&=
		\underbrace{\frac12 p_a^\top M_a^{-1}(q_{a2})p_a}_{:=T_a(q_{a2},p_a)} + \underbrace{V_d(q_{a2}) - V(q_{a1})}_{:=V_a(q_{a1},q_{a2})},
	\end{split}
\end{equation}
where $M_a^{-1}(q_{a2})$ is the inverse added mass, $T_a(q_{a2},p_a)$ is the added kinetic energy, $V_d(q_{a2})$ is the desired closed-loop potential energy, $V(q_{a1})$ is the plant potential energy function \eqref{OLsystem} evaluated at the plant state $q_{a1}$ and $V_a(q_{a1},q_{a2})$ is the total added potential energy. It is important to note that, although $M_a^{-1}(q_{a2})$ is represented as a matrix inverse, it need not be invertible nor positive. Indeed, it will be shown in subsequent developments that the key requirement is that 
\begin{equation}\label{MdDef}
	M_d^{-1}(q) := M^{-1}(q)+M_a^{-1}(q)
\end{equation}
should be positive definite.

In subsequent analysis it will be shown that the interconnection of the control system with the plant \eqref{OLsystem_alt} via the interconnection
\begin{equation}\label{mechInterconnection}
	\begin{split}
		u_v &= -y_{c1} \\
		u_{c1} &= y_v,
	\end{split}
\end{equation}
yields Casimirs
\begin{equation}\label{mechCasimir}
	\begin{bmatrix}
		q_{a1} \\ q_{a2} \\ p_a
	\end{bmatrix}
	=
	\underbrace{
	\begin{bmatrix}
		I_n & 0_{n\times n} \\
		I_n & 0_{n\times n} \\
		0_{n\times n} & I_n
	\end{bmatrix}
	\begin{bmatrix}
		q \\ p
	\end{bmatrix}
	}_{f_c(x_p)}.
\end{equation}
Assuming the Casimirs exist, some intuition regarding the construction of the control system in Figure \ref{underactuatedMassShaping} can be provided. Both $q_{a1}, q_{a2}$ were constructed to be equal to $q$. Firstly $\dot q_{a1}$ is equal to $\dot q$ by interconnection with the plant virtual output $y_v$ via a 1-junction. To verify a similar relation for $q_{a2}$, assume that $q_{a2} = q, p_{a} = p$ holds which results in $\nabla_{p_a}H_a = M_a^{-1}(q)p$ and $u_{c1}+\nabla_{p_a}H_a = M_d^{-1}p$. With this in mind, the transformer can be seen to reconstruct the velocity $\dot q = M^{-1}(q)p$ for the bottom 1-junction, resulting in $\dot q_{a1} = \dot q$.

To construct a Casimir $p_a = p$, first note from \eqref{OLsystem} and \eqref{Edef} that the plant momentum dynamics can be expressed as
\begin{equation}
	\dot p = -\nabla_{q}V - E(q,p)M^{-1}(q)p + u_v.
\end{equation}
The control structure acts to remove these forces from the plant via the right side of the control structure and re-introduce them via the top 0-junction where they are shared with the dynamics of $p_a$. The $\dot q_{a1}$ bond acts to cancel the gravity term from the plant $-\nabla_{q}V$. Recalling that the bottom 1-junction has flow equal to $M^{-1}(q)p$, the right-side gyrator cancels the term $- E(q,p)M^{-1}(q)p$ from the plant. The left-side gyrator then re-introduces the force $-E(q,p)M^{-1}(q)p$ via the top 0-junction where it is shared between $\dot p$ and $\dot p_a$, establishing the desired Casimir.

The claimed Casimir \eqref{mechCasimir} is now formalised in the following Proposition. For this development, note that the gradients of the added energy $H_a(\cdot)$ satisfy
\begin{equation}\label{HaGrad}
	\begin{split}
		\nabla_{q_{a1}}H_a
		&=
		-\nabla_{q_{a1}}V \\
		\nabla_{q_{a2}}H_a
		&=
		\nabla_{q_{a2}}T_a + \nabla_{q_{a2}}V_d \\
		\nabla_{q_{a2}}T_a
		&=
		\frac12\frac{\partial^\top}{\partial q_{a2}}\left(M_a^{-1}(q_{a2})p_a\right)p_a \\
		\nabla_{p_a}H_a
		&=
		M_a^{-1}(q_{a2})p_a
	\end{split}
\end{equation}
and the expressions $A(\cdot), B(\cdot), C(\cdot)$ in Figure \ref{underactuatedMassShaping} can be evaluated as
\begin{equation}\label{ABCdefs}
	\begin{split}
		A(q_{a2},p_{a},u_{c1})
		=&
		M^{-1}(q_{a2})M_d(q_{a2})\left[u_{c1} + \nabla_{p_a}H_a\right] \\
		B(q_{a2},p_{a},u_{c1})
		=&
		\nabla_{q_{a2}}H_a - E^\top(q_{a2},p_{a})\nabla_{p_a}H_a \\
		&- M(q_{a2})J(q_{a2},p_a)M_d(q_{a2}) \\
		&\times\left[u_{c1} + \nabla_{p_a}H_a\right] \\
		C(q_{a2},p_{a},u_{c1})
		=&
		M_d(q_{a2})M^{-1}(q_{a2})\nabla_{q_{a2}}H_a \\
		&- M_d(q_{a2})M^{-1}(q_{a2})E^\top(q_{a2},p_{a})\nabla_{p_a}H_a \\
		&- M_d(q_{a2})J(q_{a2},p_a)M_d(q_{a2}) \\
		&\times\left[u_{c1} + \nabla_{p_a}H_a\right],
	\end{split}
\end{equation}
with $M_d(\cdot)$ defined in \eqref{MdDef}.
To ensure that the Casimir exists, a number of requirements are imposed on the selection of the added inverse mass $M_a^{-1}(q)$ and closed-loop potential energy $V_d(q)$ which are equivalent of the standard matching conditions used in IDA-PBC \eqref{matchingEquations}.

\begin{proposition}\label{prop:Casimir}
	Consider the control system in Figure \ref{underactuatedMassShaping} and assume that it is interconnected to the plant \eqref{OLsystem_alt} via the interconnection \eqref{mechInterconnection}. If $M_a^{-1}(q_a), V_d(q_a)$ are chosen such that
	\begin{equation}\label{matching}
		G^\perp C(q_{a2},p_a,u_{c1})|_{q_{a2}=q,p_a=p} = G^\perp \nabla_q V
	\end{equation}
	and the controller states are initialised as $q_{a1}(0) = q_{a2}(0) = q(0)$, $p_a(0) = p(0)$, the Casimir \eqref{mechCasimir} holds for all time.
\end{proposition}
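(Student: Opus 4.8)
The plan is to verify the Casimir claim \eqref{mechCasimir} by differentiating the candidate Casimir functions and showing that, under the stated matching condition \eqref{matching} and the stated initialisation, the error variables $q_{a1}-q$, $q_{a2}-q$ and $p_a-p$ all satisfy a homogeneous ODE with zero initial condition, hence remain zero. Concretely, I would introduce the error coordinates $e_1 := q_{a1}-q$, $e_2 := q_{a2}-q$, $e_3 := p_a - p$ and compute $\dot e_1, \dot e_2, \dot e_3$ along the interconnected dynamics obtained by substituting \eqref{mechInterconnection} into the plant \eqref{OLsystem_alt} and the controller of Figure \ref{underactuatedMassShaping}. Using the bond-graph relations together with the gradient expressions \eqref{HaGrad} and the auxiliary maps \eqref{ABCdefs}, each of these derivatives should reduce to an expression that is identically zero when $e_1=e_2=e_3=0$; i.e. the set $\{q_{a1}=q_{a2}=q,\ p_a=p\}$ is invariant.

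The first (and easiest) step is $\dot e_1$: since $\dot q_{a1}$ is set by the interconnection with the virtual output $y_v = \dot q$ through the 1-junction, $\dot q_{a1} = \dot q$ holds unconditionally, so $e_1(t)\equiv e_1(0)=0$. The second step is $\dot e_2$: here I would show that when $q_{a2}=q$ and $p_a=p$ the transformer output reconstructs $\dot q = M^{-1}(q)p$ for the bottom 1-junction (as sketched in the paragraph preceding the Proposition, using $u_{c1}+\nabla_{p_a}H_a = M_d^{-1}p$), so again $\dot q_{a2}=\dot q$ on the candidate manifold. The third and crucial step is $\dot e_3$: from \eqref{OLsystem} and \eqref{Edef} the plant momentum satisfies $\dot p = -\nabla_q V - E(q,p)M^{-1}(q)p + u_v$, while $\dot p_a$ is driven through the top 0-junction by the re-introduced forces built from $A,B,C$. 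I would compute $\dot p_a$ on the candidate manifold and show that the actuated ($G^\top$) components of $\dot p_a - \dot p$ vanish automatically by the structure of the interconnection, while the unactuated ($G^\perp$) components vanish precisely because of the matching condition \eqref{matching}, which forces $G^\perp C|_{q_{a2}=q,p_a=p} = G^\perp\nabla_q V$ to cancel the corresponding part of the plant momentum dynamics.

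The main obstacle will be the bookkeeping in the $\dot e_3$ computation: one must carefully track which forces are removed from the plant via the right-hand side of the control structure and re-injected at the top 0-junction, decompose the resulting expression along $\mathrm{im}\,G$ and $\ker G^\top$, and confirm that the virtual-input constraint $G^\perp u_v = 0$ from \eqref{matchingCondition} is respected so that the earlier identities \eqref{uvDef}--\eqref{uDef_uv} apply. Once $\dot e_1=\dot e_2=\dot e_3=0$ is established on the manifold $e=0$, uniqueness of solutions of the closed-loop ODE together with the initialisation $q_{a1}(0)=q_{a2}(0)=q(0)$, $p_a(0)=p(0)$ gives $e(t)\equiv 0$, i.e. the Casimir \eqref{mechCasimir} holds for all time, completing the proof.
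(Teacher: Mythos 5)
Your proposal follows essentially the same route as the paper's own proof: show that on the candidate manifold $q_{a1}=q_{a2}=q$, $p_a=p$ the state derivatives agree ($\dot q_{a1}$ unconditionally via the interconnection, $\dot q_{a2}$ via the expression $A(\cdot)$ collapsing to $M^{-1}(q)p$, and $\dot p_a=\dot p$ using the gravity cancellation on the actuated channel together with the matching condition \eqref{matching} on the $G^\perp$ channel), and then conclude from the initialisation that the Casimir holds for all time. The error-coordinate/invariance phrasing is just a mild repackaging of the paper's argument, so no substantive difference or gap to report.
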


\begin{proof}
	Consider that at some time instant $T$ \eqref{mechCasimir} holds, implying that
	\begin{equation}
		\begin{split}
			q_{a1}(T) = q_{a2}(T) = q(T), \ p_a(T) = p(T).
		\end{split}
	\end{equation}
	It is shown that if \eqref{matching} is satisfied, then the derivatives of the states also agree 
	\begin{equation}\label{CasimirDerivatives}
		\begin{split}
			\dot q_{a1}(T) = \dot q_{a2}(T) = \dot q(T), \ \dot p_a(T) = \dot p(T),
		\end{split}
	\end{equation}
	establishing the existence of a Casimir for all future time.
	
	We proceed by first establishing the relationship for the configuration vector. From \eqref{mechInterconnection} and \eqref{OLsystem_alt} $u_c = M^{-1}(q)p$ which establishes $\dot q_{a1}(T) = \dot q(T)$. The input $u_c = M^{-1}(q)p$ is substituted into $A(\cdot)$ \eqref{ABCdefs} to find
	\begin{equation}\label{Adef}
		\begin{split}
			A|_{t=T} &= M^{-1}(q_{a2})M_d(q_{a2})\left[M^{-1}(q)p + M_{a}^{-1}(q_{a2})p_a\right]|_{t=T} \\
			&= M^{-1}(q)p|_{t=T} \\
			&= \dot q|_{t=T},
		\end{split}
	\end{equation}
	confirming that $\dot q_{a2}(T) = \dot q(T)$.
	
	Next we consider the behaviour of the momentum states. First note that, from the bond graph in Figure \ref{underactuatedMassShaping} and the definition \eqref{uDef_uv}, the plant input $u_v$ is given by
	\begin{equation}\label{controlDef}
		\begin{split}
			u &= G^\top u_v \\
			&= 
			G^\top\left[Gu_{c2} - C(q_{a2},p_a,u_{c1}) + \nabla_{q_{a1}} V(q_{a1})\right] \\
			&= 
			u_{c2} - G^\top\left[ C(q_{a2},p_a,u_{c1}) - \nabla_{q_{a1}} V(q_{a1})\right].
		\end{split}
	\end{equation}
	Using the control definition \eqref{controlDef} and the condition \eqref{matching}, the plant dynamics \eqref{OLsystem} can be expanded as
	\begin{equation}
		\begin{split}
			\dot p
			=&
			-\nabla_q T(q,p) - 
			\begin{bmatrix}
				G^\top \nabla_q V(q) \\
				G^\perp \nabla_q V(q)
			\end{bmatrix}
			+ Gu \\
			=&
			-\nabla_q T(q,p) - 
			\begin{bmatrix}
				G^\top \nabla_q V(q) \\
				G^\perp \nabla_q V(q)
			\end{bmatrix} \\
			&+ 
			G\left\lbrace
			u_{c2} - G^\top\left[ C(q_{a2},p_a,u_{c1}) - \nabla_{q_{a1}} V(q_{a1})\right]
			\right\rbrace \\
			=&
			-\nabla_q T(q,p) 
			+ 
			Gu_{c2} \\
			&- 
			\begin{bmatrix}
				G^\top \left\lbrace C(q_{a2},p_a,u_{c1}) + \nabla_q V(q) - \nabla_{q_{a1}} V(q_{a1}) \right\rbrace \\
				G^\perp \nabla_q V(q)
			\end{bmatrix} \\
		\end{split}
	\end{equation}
	Note that at time $T$, $\nabla_q V(q)|_{t=T} = \nabla_{q_{a1}} V(q_{a1})|_{t=T}$. Additionally recall the assumption \eqref{matching} which allows the simplification
	\begin{equation}\label{pDef_timeT}
		\begin{split}
			\dot p|_{t=T}
			=&
			-\nabla_q T(q,p) - 
			C(q_{a2},p_a,u_{c1})
			+ 
			Gu_{c2}. \\
		\end{split}
	\end{equation}
	Recalling the identity \eqref{Adef}, the dynamics of $p_a$ at time $T$ can be expanded to
	\begin{equation}\label{paDef_timeT}
		\begin{split}
			\dot p_a
			&=
			Gv - E(q_{a2},p_a)A(\cdot)|_{t=T} - C(q_{a2},p_a,u_{c1})|_{t=T} \\
			&=
			Gv - \nabla_q T(q,p)|_{t=T} - C(q_{a2},p_a,u_{c1})|_{t=T},
		\end{split}
	\end{equation}
	which agrees with \eqref{pDef_timeT}.
	As \eqref{pDef_timeT} and \eqref{paDef_timeT} agree at time $T$, \eqref{CasimirDerivatives} is verified for the momentum states. If at the initial time $t=0$ we have $q_a(0) = q(0)$, $p_a(0) = p(0)$, it follows that $q_a(t) = q(t)$ and $p_a(t) = p(t)$ for all time via integration, completing the proof.
\end{proof}

Proposition \ref{prop:Casimir} has established that the Casimir \eqref{mechCasimir} holds under some technical assumptions that will be verified in subsequent design. Before proceeding, we note that the control subsystem in Figure \ref{underactuatedMassShaping} can be written in the form \eqref{cbi:controller} with
\small
\begin{equation}\label{controlMatDefs}
	\begin{split}
		x_c
		&=
		\begin{bmatrix}
			q_{a1}^\top & q_{a2}^\top & p_a^\top
		\end{bmatrix}^\top \\
		H_c(q_{a1},q_{a2},p_a) &= H_a(q_{a1},q_{a2},p_a) \\	
		K_{11}(q_{a2},p_a)
		&=
		\begin{bmatrix}
			0_{n\times n} & 0_{n\times n} & 0_{n\times n} \\
			0_{n\times n} & 0_{n\times n} & M^{-1}M_d \\
			0_{n\times n} & -M_dM^{-1} & D - D^\top + M_dJM_d
		\end{bmatrix} \\
		K_{12}(q_{a2},p_a)
		&=
		\begin{bmatrix}
			I_n \\
			M^{-1}M_d \\
			M_dJM_d - D^\top
		\end{bmatrix} \\
		K_{13}
		&=
		\begin{bmatrix}
			0_{n\times m} \\ 0_{n\times m} \\ G	
		\end{bmatrix} \\
		K_{21}(q_{a2},p_a)
		&=
		\begin{bmatrix}
			-I_n &- M_dM^{-1} & D + M_dJM_d \\
		\end{bmatrix} \\
		K_{31}
		&=
		\begin{bmatrix}
			0_{m\times n} & 0_{m\times n} & -G^\top
		\end{bmatrix} \\
		K_{22}(q_{a2},p_a)
		&=
		M_dJM_d \\
		K_{23}
		&=
		G \\
		K_{32}
		&=
		-G^\top \\
		K_{33}
		&=
		0_{m\times m} \\
		D(q_{a2},p_a)
		&=
		M_dM^{-1}E^\top.
	\end{split}
\end{equation}
\normalsize

In the subsequent developments it is assumed that the requisite \eqref{matching} of Proposition \ref{prop:Casimir} holds, implying $q_{a1}(t) = q_{a2}(t) = q(t)$, $p_a(t) = p(t)$. Condition \eqref{matching} will be verified by choice of $M_a^{-1}$ and $V_d$. Assuming the Casimir holds, the expressions for $A(\cdot), B(\cdot), C(\cdot)$ in \eqref{ABCdefs} can be simplified to
\begin{equation}\label{ABCdefs_proj}
	\begin{split}
		A(q,p) 
		=&
		M^{-1}(q)p \\
		B(q,p) 
		=&
		\nabla_{q}T_a(q,p) + \nabla_{q}V_d(q,p) - E^\top(q,p)M_a^{-1}(q)p \\ 
		&- M(q)J(q,p)p  \\
		C(q,p) 
		=&
		M_d(q)
		\left\lbrace
			M^{-1}(q)\nabla_{q}T_a(q,p) + M^{-1}(q)\nabla_{q}V_d(q,p) \right. \\
			&\left.- M^{-1}(q)E^\top(q,p)M_a^{-1}(q)p - M^{-1}(q)J(q,p)
		\right\rbrace  \\
	\end{split}
\end{equation}

Recalling the definition of $\nabla_{q_a}T_a$ in \eqref{HaGrad}, it is noted that $C(\cdot)$ contains some terms which are quadratic in $p$ and some that are functions of $q$ only. The function $C(\cdot)$ is divided into
\begin{equation}\label{Cdivision}
	\begin{split}
		C(q,p) 
		=&
		C_{KE}(q,p) + C_{PE}(q) \\
		C_{KE}(q,p)
		=&
		\underbrace{
		\left[M_a^{-1}(q) + M^{-1}(q)\right]^{-1}
		}_{M_d(q)}
		\left\lbrace
			Y(q,p) - J(q,p)
		\right\rbrace p \\
		C_{PE}(q)
		&=
		\left[M_a^{-1}(q) + M^{-1}(q)\right]^{-1}
		M^{-1}(q)\nabla_{q}V_d,
	\end{split}
\end{equation}
where $KE$ represents \textit{kinetic energy}, $PE$ represents \textit{potential energy} and $Y$ is defined as
\begin{equation}\label{XYdefs}
	\begin{split}
		Y(q,p)
		=&
		\frac12M^{-1}(q)\frac{\partial^\top}{\partial q}\left(M_a^{-1}(q)p\right) \\
		&- \frac12\frac{\partial}{\partial q}\left(M^{-1}(q)p\right)M_a^{-1}(q).
	\end{split}
\end{equation}
As $Y$ is linear in $p$ it can be written as
\begin{equation}
	\begin{split}
		Y(q,p)
		=
		\sum_{i=1}^n p_i Y^i(q),
	\end{split}
\end{equation}
where
\begin{equation}\label{YiDef}
	\begin{split}
		Y^i(q)
		=&
		\frac12 M^{-1}(q)\frac{\partial^\top}{\partial q}\left(M_a^{-1}(q)e_i\right) \\
		&- \frac12\frac{\partial}{\partial q}\left(M^{-1}(q)e_i\right)M_a^{-1}(q).
	\end{split}
\end{equation}

The key constraint for control design is choosing $M_a^{-1}(q), V_d(q)$ satisfying the matching condition \eqref{matching}. From the definition of $C(\cdot)$ in \eqref{ABCdefs_proj}, the constraint equation is a function of both $M_a^{-1}(q)$ and $\left[M_a^{-1}(q) + M^{-1}(q)\right]^{-1}$, making direct design of this matrix difficult. To simplify the design process, an alternate characterisation of \eqref{matching} is introduced.

In the following proposition, the inverse mass matrix, inverse added mass matrix, interconnection matrix and $Y(\cdot)$ are partitioned as
\begin{equation}
	\begin{split}
		\begin{bmatrix}
			m_{11}(q) & m_{21}^\top(q) \\
			m_{21}(q) & m_{22}(q)
		\end{bmatrix} 
		&=
		\begin{bmatrix}
			G^\top \\ G^\perp
		\end{bmatrix}
		M^{-1}(q)
		\begin{bmatrix}
		G & G^{\perp\top}
		\end{bmatrix} \\
		\begin{bmatrix}
			m_{a11}(q) & m_{a21}^\top(q) \\
			m_{a21}(q) & m_{a22}(q)
		\end{bmatrix} 
		&=
		\begin{bmatrix}
			G^\top \\ G^\perp
		\end{bmatrix}
		M_a^{-1}(q)
		\begin{bmatrix}
		G & G^{\perp\top}
		\end{bmatrix} \\
		\begin{bmatrix}
			J_{11}(q,p) & -J_{21}^\top(q,p) \\
			J_{21}(q,p) & J_{22}(q,p)
		\end{bmatrix}
		&=
		\begin{bmatrix}
			G^\top \\ G^\perp
		\end{bmatrix}
		J(q,p)
		\begin{bmatrix}
		G & G^{\perp\top}
		\end{bmatrix} \\
		\begin{bmatrix}
			Y_{11}(q,p) & Y_{12}(q,p) \\
			Y_{21}(q,p) & Y_{22}(q,p)
		\end{bmatrix}
		&=
		\begin{bmatrix}
			G^\top \\ G^\perp
		\end{bmatrix}
		Y(q,p)
		\begin{bmatrix}
		G & G^{\perp\top}
		\end{bmatrix}.
	\end{split}
\end{equation}
Using the above definitions, an alternate characterisation of \eqref{matching} is presented.

\begin{proposition}\label{prop:matchingEquations}
	The matching condition \eqref{matching} is satisfied if:
	\begin{itemize}
	\item The added mass matrix $M_a^{-1}(q)$ is chosen such that
	\begin{equation}\label{matching:KE}
		\begin{split}
			D(q)
			\left[Y^i(q) + Y^{i\top}(q)\right]
			D^\top(q) = 0_{(n-m)\times(n-m)}
		\end{split}
	\end{equation}
	for all $i\in\left\lbrace1,\dots,n\right\rbrace$ where
	\begin{equation}\label{Ddef}
		D(q)
		=
		\begin{bmatrix}
			(m_{21}+m_{a21})(m_{11}+m_{a11})^{-1} & -I_{n-m}
		\end{bmatrix}.
	\end{equation}
	
	\item The desired potential energy $V_d(q)$ satisfies
	\begin{equation}\label{matching:PE}
		\begin{split}
			s_1(q)G^\perp\nabla_q V
			&=
			-s_2(q)G^\top \nabla_{q}V_d - s_3(q)G^\perp \nabla_{q}V_d \\
			&=
			-D(q)M^{-1}(q)\nabla_q V_d,
		\end{split}
	\end{equation}
	where
	\begin{equation}\label{s1Def}
		\begin{split}
			s_1&(q) = (m_{22}+m_{a22}) \\
			&-(m_{21}+m_{a21})(m_{11}+m_{a11})^{-1}(m_{21}^\top+m_{a21}^\top)
		\end{split}
	\end{equation}
	is the Schur complement of $M^{-1}(q)+M_a^{-1}(q)$ and
	\begin{equation}\label{s2s3Def}
		\begin{split}
			s_2(q)
			&=
			(m_{21}+m_{a21})(m_{11}+m_{a11})^{-1}m_{11}-m_{21} \\
			s_3(q)
			&=
			(m_{21}+m_{a21})(m_{11}+m_{a11})^{-1}m_{21}^\top-m_{22}.
		\end{split}
	\end{equation}
	\end{itemize}
\end{proposition}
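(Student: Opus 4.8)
The plan is to separate \eqref{matching} according to its homogeneity in $p$. Since $C(q,p)=C_{KE}(q,p)+C_{PE}(q)$ with $C_{KE}$ homogeneous quadratic in $p$ and both $C_{PE}$ and $\nabla_q V$ independent of $p$, the condition \eqref{matching} holds if and only if $G^\perp C_{PE}(q)=G^\perp\nabla_q V$ and $G^\perp C_{KE}(q,p)=0$ for all $p$. The ingredient common to both is the lower block row of $M_d=\left[M^{-1}+M_a^{-1}\right]^{-1}$. Partitioning $M^{-1}+M_a^{-1}$ in the $\begin{bmatrix}G^\top\\ G^\perp\end{bmatrix}$, $\begin{bmatrix}G & G^{\perp\top}\end{bmatrix}$ coordinates (which by \eqref{Gdef} is simply the natural block partition) and applying the block-inversion formula gives $G^\perp M_d=-s_1^{-1}D$, with $s_1$ the Schur complement \eqref{s1Def} and $D$ as in \eqref{Ddef}; here $s_1$ is invertible because $M^{-1}+M_a^{-1}>0$ is a standing assumption.

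For the potential-energy part, $C_{PE}=M_dM^{-1}\nabla_q V_d$, hence $G^\perp C_{PE}=-s_1^{-1}DM^{-1}\nabla_q V_d$. A direct block multiplication shows $DM^{-1}=\begin{bmatrix}s_2 & s_3\end{bmatrix}$ with $s_2,s_3$ as in \eqref{s2s3Def}, which at once verifies the equivalence of the two right-hand sides in \eqref{matching:PE} and, after substituting \eqref{matching:PE}, yields $G^\perp C_{PE}=-s_1^{-1}\,DM^{-1}\nabla_q V_d=-s_1^{-1}\left(-s_1 G^\perp\nabla_q V\right)=G^\perp\nabla_q V$, as required.

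For the kinetic-energy part, $C_{KE}=M_d\{Y-J\}p$, so the requirement collapses to $D\{Y-J\}p=0$ for all $p$, i.e. $DJp=DYp$. I would show that the free skew-symmetric matrix $J$ can always be chosen so that this identity holds, precisely when \eqref{matching:KE} is satisfied. The approach is to expand both sides in the $(G^\top,G^\perp)$ partition, write $p=Gp_{\mathrm a}+G^{\perp\top}p_{\mathrm u}$, and equate the resulting (quadratic in $p$) identity by monomial type $p_{\mathrm a}p_{\mathrm a}$, $p_{\mathrm a}p_{\mathrm u}$, $p_{\mathrm u}p_{\mathrm u}$. The off-diagonal block $J_{21}$ of $J$ is unconstrained, so it can be used to cancel the $p_{\mathrm a}p_{\mathrm a}$ and $p_{\mathrm a}p_{\mathrm u}$ contributions together with the $\mathrm{range}\big((m_{21}+m_{a21})(m_{11}+m_{a11})^{-1}\big)$-component of the $p_{\mathrm u}p_{\mathrm u}$ contribution; the leftover in the $p_{\mathrm u}p_{\mathrm u}$ sector is a quadratic form that only the genuinely skew freedom in $J_{22}$ can absorb, and rewriting it with $D$ on the left and $D^\top$ on the right identifies this residue, coefficient by coefficient in $p$, with the matrices $D\big(Y^i+Y^{i\top}\big)D^\top$. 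These vanish by \eqref{matching:KE}, so a compatible skew-symmetric $J$ exists, $D\{Y-J\}p=0$, and $G^\perp C_{KE}=0$; combined with the preceding paragraph this establishes \eqref{matching}.

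The main obstacle is this last step: managing the coupling between the three monomial sectors so that the degrees of freedom of the skew-symmetric $J$ genuinely suffice — that is, verifying that \eqref{matching:KE} is not merely necessary but also sufficient for the existence of a compatible $J$ — and correctly recognising the uncancellable residue as $D(Y^i+Y^{i\top})D^\top$. Everything else (the Schur-complement identities for $G^\perp M_d$ and $DM^{-1}$, and the potential-energy substitution) is routine linear algebra.
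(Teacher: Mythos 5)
Your route is the same as the paper's: split \eqref{matching} by homogeneity in $p$ via \eqref{Cdivision}, use the Schur-complement identities $G^\perp M_d = -s_1^{-1}D$ and $DM^{-1} = \begin{bmatrix} s_2 & s_3\end{bmatrix}$ (the paper packages the same fact by premultiplying with $M_a^{-1}+M^{-1}$ and using $D$ as a left annihilator of $\left[M_a^{-1}+M^{-1}\right]G$), verify the potential-energy half by direct substitution of \eqref{matching:PE}, and reduce the kinetic-energy half to the existence of a skew-symmetric $J$ with $D\left\lbrace Y-J\right\rbrace p = 0$. The potential-energy half and those identities are correct as stated.

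The gap is exactly the step you flag yourself: you neither construct $J$ nor verify that the uncancellable residue is $D(Y^i + Y^{i\top})D^\top$, and that is where essentially all of the work in the paper's proof lies. It closes more cleanly than your monomial-sector plan. Since only sufficiency is claimed, impose the stronger matrix identity $D\left\lbrace J - Y\right\rbrace = 0_{(n-m)\times n}$ and work block-column-wise in the partition induced by \eqref{Gdef}: the first block column is solved exactly by $J_{21} = (m_{21}+m_{a21})(m_{11}+m_{a11})^{-1}(J_{11}-Y_{11}) + Y_{21}$ with $J_{11}$ free skew; substituting this into the second block column leaves the single condition $D N D^\top = 0_{(n-m)\times(n-m)}$ with
\begin{equation*}
N = \begin{bmatrix} J_{11}^\top - Y_{11} & -Y_{12} - Y_{21}^\top \\ 0_{(n-m)\times m} & J_{22} - Y_{22} \end{bmatrix}.
\end{equation*}
The skew part of $DND^\top$ is removed by defining $J_{22}$ as the corresponding skew-symmetric expression (which is admissible, so the assembled $J$ is genuinely skew), while the symmetric part equals $-\tfrac12 D\left[Y+Y^{\top}\right]D^\top$; since $Y = \sum_i p_i Y^i$ with $Y^i$ as in \eqref{YiDef}, this vanishes for all $p$ precisely under \eqref{matching:KE}, giving $G^\perp C_{KE} = -s_1^{-1}D\left\lbrace Y-J\right\rbrace p = 0$. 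Working at the matrix level both avoids the quadratic-form bookkeeping that makes your three-sector expansion delicate (in the $p_u p_u$ sector only symmetric coefficient parts are visible, so "recognising the residue" is not automatic) and sidesteps the necessity question you raise, which is irrelevant for the "if" statement. A final small point: the invertibility of $s_1$ and of $m_{11}+m_{a11}$ should be attributed to the standing requirement $M_d^{-1}=M^{-1}+M_a^{-1}>0$ from \eqref{MdDef} (positivity of the leading block and of its Schur complement), rather than cited as a separate assumption.
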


\begin{proof}
From the graph in Figure \ref{underactuatedMassShaping} and the interconnection \eqref{mechInterconnection}, the virtual input $u_v$ is given by
\begin{equation}\label{matching2}
	\begin{split}
		u_v
		=
		Gu
		=&
		Gv - C + \nabla_q V.
	\end{split}
\end{equation}
Recalling the definition of $G$, \eqref{matching2} is equivalent to \eqref{matching}.
Collecting the terms $u, v$ and left multiplying by $M_a^{-1} + M^{-1}$ results in
\begin{equation}\label{altMatching1}
	\begin{split}
		&\left[M_a^{-1} + M^{-1}\right]G(u-v) \\
		&\phantom{--}=
		\left[M_a^{-1} + M^{-1}\right]
		\left\lbrace
		- C_{KE} - C_{PE} + \nabla_q V
		\right\rbrace
	\end{split}
\end{equation}
Due to the structure of $G$ in \eqref{Gdef}, \eqref{altMatching1} has the left annihilator $D(\cdot)$, defined in \eqref{Ddef}.

Left multiplying \eqref{altMatching1} by $D(\cdot)$ and separating the components into those relating to the kinetic and potential energies result in
\begin{align}
	0_{(n-m)\times 1}
	&=
	-D
	\left[M_a^{-1} + M^{-1}\right]
	C_{KE} \label{altMatching:KE1} \\
	0_{(n-m)\times 1}
	&=
	-D
	\left[M_a^{-1} + M^{-1}\right]
	\left\lbrace
	C_{PE} - \nabla_q V
	\right\rbrace. \label{altMatching:PE1}
\end{align}
Using \eqref{Cdivision}, \eqref{altMatching:PE1} is expanded to
\begin{equation}
	\begin{split}
		0_{(n-m)\times 1}
		=&
		D
		\left[M_a^{-1} + M^{-1}\right]\nabla_q V \\
		&-
		DM^{-1}\nabla_{q}V_a,
	\end{split}
\end{equation}
which can be seen to agree with \eqref{matching:PE} after expanding.

Now considering the constraint on the kinetic energy expression \eqref{altMatching:KE1}, the definition \eqref{Cdivision} is substituted to find
\begin{equation}\label{altMatching:KE3}
	\begin{split}
		0_{(n-m)\times n}
		=&
		D
		\left\lbrace
			- Y + J
		\right\rbrace. \\
	\end{split}
\end{equation}
Using the relevant definitions, the first component of \eqref{altMatching:KE3} can be solved for $J_{21}(q,p)$ as
\begin{equation}
	\begin{split}
		J_{21}(q,p)
		=&
		(m_{21}+m_{a21})(m_{11}+m_{a11})^{-1}(-Y_{11}+J_{11})\\
		&+Y_{21}.
	\end{split}
\end{equation}
Substituting this expression back into the second component of \eqref{altMatching:KE3} reveals the constraint
\begin{equation}\label{altMatching:KE5}
	\begin{split}
		0&_{(n-m)\times (n-m)} \\
		=&
		(m_{21}+m_{a21})(m_{11}+m_{a11})^{-1}(- Y_{12} - J_{21}^\top) \\
		&\phantom{=}- (- Y_{22} + J_{22}) \\
		=&(m_{21}+m_{a21})(m_{11}+m_{a11})^{-1}\left[- Y_{12} - Y_{21}^\top\right] \\
		&- (m_{21}+m_{a21})(m_{11}+m_{a11})^{-1}(-Y_{11}+J_{11})^\top \\
		&\times(m_{11}+m_{a11})^{-1}(m_{21}^\top+m_{a21}^\top) - (- Y_{22} + J_{22}) \\
		=&
		-D
		\begin{bmatrix}
			-Y_{11}+J_{11}^\top & -Y_{12}-Y_{21}^\top \\
			0_{(n-m)\times m} & -Y_{22}+J_{22}
		\end{bmatrix}
		D^\top.
	\end{split}
\end{equation}
The term $J_{22}$ is taken as below to solve the skew-symmetric part of this expression,
\begin{equation}
	\begin{split}
		&J_{22} \\
		&=
		-\frac12 D
		\begin{bmatrix}
			-Y_{11}+Y_{11}^\top+J_{11}^\top-J_{11} & -Y_{12}-Y_{21}^\top \\
			Y_{12}^\top+Y_{21} & -Y_{22}+Y_{22}^\top
		\end{bmatrix}
		D^\top,
	\end{split}
\end{equation}
where $J_{11}\in\mathbb{R}^{m\times m}$ is a free skew-symmetric term.
The symmetric part of \eqref{altMatching:KE5} must also be equal to zero, implying that
\begin{equation}
	\begin{split}
		D
		\left[Y + Y^{\top}\right]
		D^\top = 0_{(n-m)\times(n-m)}.
	\end{split}
\end{equation}
Finally, noting that this must be true for each $p_i$, the condition \eqref{matching:KE} follows.
\end{proof}

\begin{remark}\label{rem:resolvingPDEs}
	The expression \eqref{matching:KE} implicitly defines a set of PDEs that must be satisfied by any choice of $M_a^{-1}(q)$. From the definition of $Y^i$ in \eqref{YiDef}, the first $m$ equations are describe partial differential equations involving the partial derivatives of $m_{a11}, m_{a21}$. The remaining $n-m$ equations describe partial differential equations involving the partial derivatives of $m_{a21}, m_{a22}$. This structure can be useful for resolving the equations into a standard representation for solving.
\end{remark}


\begin{corollary}\label{corr:KE_ODE}
	In the special case of under-actuation degree 1, if $M^{-1}$, $M_a^{-1}$ is a function of only 1 configuration variable $q_i$, the kinetic energy matching equations \eqref{matching:KE} can be reduced to a set of ODEs.
	\begin{equation}\label{underactuated1ODE}
		\begin{split}
			\frac{d}{d q_i}
			\begin{bmatrix}
				m_{a21}^\top \\ m_{a22}
			\end{bmatrix}
			=
			g\left(m_{a11},\frac{d}{d q_i}m_{a11},M^{-1},\frac{d}{d q_i}M^{-1}\right),
		\end{split}
	\end{equation}
	where $g(\cdot)\in\mathbb{R}^{n}$ is a function implicitly defined by the matching conditions \eqref{matching:KE} and $m_{a11}(q_i)$ can be chosen freely.
\end{corollary}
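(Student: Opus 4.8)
The plan is to observe that, because $M^{-1}$ and $M_a^{-1}$ depend on the single coordinate $q_i$, every partial derivative with respect to a coordinate other than $q_i$ appearing in the matrices $Y^j$ of \eqref{YiDef} vanishes, so that \eqref{matching:KE} degenerates from a system of PDEs into a system of ODEs in $q_i$; what remains is to solve those ODEs for the derivatives of the unknowns. (To avoid a clash with the distinguished coordinate $q_i$, I relabel the index ranging over $\{1,\dots,n\}$ in \eqref{matching:KE}--\eqref{YiDef} as $j$.) With $n-m=1$ the annihilator $D(q)$ of \eqref{Ddef} is a row vector, so each of the $n$ conditions in \eqref{matching:KE} is scalar; since $D\,Y^{j\top}D^\top=(D\,Y^jD^\top)^\top=D\,Y^jD^\top$ as a scalar, each condition is equivalent to $D\,Y^jD^\top=0$. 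The first step is to evaluate $Y^j$ under the standing assumption: the Jacobians $\frac{\partial}{\partial q}(M_a^{-1}e_j)$ and $\frac{\partial}{\partial q}(M^{-1}e_j)$ have only their $i$-th column nonzero, so that, using the symmetry of $M^{-1}$ and $M_a^{-1}$,
\[
	Y^j(q)=\frac12\,M^{-1}e_i\,e_j^\top\frac{dM_a^{-1}}{dq_i}-\frac12\,\frac{dM^{-1}}{dq_i}\,e_j\,e_i^\top M_a^{-1}.
\]

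Next I would substitute this into $D\,Y^jD^\top=0$. Setting $\alpha(q_i):=D\,M^{-1}e_i$ and $\beta(q_i):=D\,M_a^{-1}e_i=e_i^\top M_a^{-1}D^\top$, which are scalars, the $n$ conditions \eqref{matching:KE}, stacked over $j=1,\dots,n$, collapse into the single vector identity
\[
	\left[\alpha(q_i)\,\frac{dM_a^{-1}}{dq_i}-\beta(q_i)\,\frac{dM^{-1}}{dq_i}\right]D^\top=0_{n\times 1}.
\]
Writing $D^\top$ as the column whose first $m$ entries are $w:=(m_{11}+m_{a11})^{-1}(m_{21}+m_{a21})^\top$ and whose last entry is $-1$, and partitioning $\frac{d}{dq_i}M_a^{-1}$ and $\frac{d}{dq_i}M^{-1}$ into their natural $(m,1)$ blocks, the first $m$ components of this identity involve only $\frac{d}{dq_i}m_{a11}$ and $\frac{d}{dq_i}m_{a21}$, while the last component involves only $\frac{d}{dq_i}m_{a21}$ and $\frac{d}{dq_i}m_{a22}$ --- precisely the split anticipated in Remark \ref{rem:resolvingPDEs}.

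Treating $m_{a11}(q_i)$ as a free function, the first $m$ components reduce, after isolating the derivative of $m_{a21}$, to $\alpha(q_i)\,\frac{d}{dq_i}m_{a21}^\top$ equalling a known expression in $m_{a11}$, $\frac{d}{dq_i}m_{a11}$, $M^{-1}$, $\frac{d}{dq_i}M^{-1}$ and the running value of $m_{a21}$; provided $\alpha(q_i)\neq 0$ this determines $\frac{d}{dq_i}m_{a21}^\top\in\mathbb{R}^m$. Feeding it into the last component, which then reads $\alpha(q_i)\,\frac{d}{dq_i}m_{a22}$ equalling a known scalar, determines $\frac{d}{dq_i}m_{a22}$. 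Stacking the two gives \eqref{underactuated1ODE}, whose right-hand side can be integrated numerically from any admissible initial data. I expect the main obstacle to be the bookkeeping of the first two steps --- confirming that $Y^j$ collapses to the rank-one form above and that the family of scalar conditions telescopes into one vector equation --- together with isolating the non-degeneracy hypothesis $\alpha(q_i)\neq 0$ that turns the implicit ODE into an explicit one: using the identity $D(M^{-1}+M_a^{-1})=-s_1\,G^\perp$ one finds $\alpha=-\beta$ when $q_i$ is an actuated coordinate and $\alpha=-(s_1+\beta)$ otherwise, so this is a transparent condition on the plant and on the free data; should it fail, \eqref{matching:KE} imposes an algebraic rather than a differential restriction and the reduction to an ODE of the stated form need not hold.
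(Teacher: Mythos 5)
Your proposal is correct and takes essentially the same route as the paper: the single-variable dependence kills all off-coordinate partials so \eqref{matching:KE} becomes a system of ODEs in $q_i$, the first $m$ scalar conditions are solved for $\frac{d}{dq_i}m_{a21}^\top$ and the last for $\frac{d}{dq_i}m_{a22}$, with $m_{a11}(q_i)$ free. You in fact go further than the paper's terse proof by computing the rank-one form of $Y^j$ explicitly, collapsing the $n$ scalar conditions into the single vector identity $\bigl[\alpha\,\tfrac{d}{dq_i}M_a^{-1}-\beta\,\tfrac{d}{dq_i}M^{-1}\bigr]D^\top=0$, and isolating the non-degeneracy condition $\alpha = D M^{-1}e_i \neq 0$ that the paper leaves implicit in asserting the equations ``can be solved''.
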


\begin{proof}
	Assuming that the mass matrix $M^{-1}$ is function only of a single configuration variable $q_i$, we will also impose that the added mass $M_a^{-1}$ is a function only of the same variable. As a consequence, the matching expression \eqref{matching:KE} is now only a function in the single variable $q_i$. Notably, all partial derivatives of $M_a^{-1}$ with respect to $q_k$, where $k\neq i$, are equal to zero.
	
	Noting Remark \ref{rem:resolvingPDEs}, the first $n-1$ expressions of \eqref{matching:KE} produce differential equations involving the partial derivatives of $m_{a11}, m_{a21}$. The dimension of $m_{a21}$ is $1\times (n-1)$, so the first $n-1$ equations can be solved simultaneously to find an expression for $\frac{d}{d q_i}m_{a21}^\top$. The $n^{th}$ expressions of \eqref{matching:KE} can then be resolved for an expression for $\frac{d}{d q_i}m_{a22}$, which has dimension 1. Combining these expressions, the matching equations \eqref{matching:KE} can be resolved into an ODE of the form \eqref{underactuated1ODE}.
\end{proof}

\begin{remark}
	Corollary \ref{corr:KE_ODE} describes situations in which the kinetic energy matching equations can be reduced to an ODE. The solution, however, will depend on the choice of $m_{a11}(q_i)$ and may not be globally defined. This poses the question of how should the function $m_{a11}(q_i)$ be chosen to ensure an appropriate solution $M_a^{-1}(q_i)$---a nonlinear control problem!
\end{remark}

The results of Corollary 1 describe the degrees of freedom that exist when constructing a solution to the added inverse mass matrix. Similar degrees of freedom exist in the definition of the closed-loop potential energy that can be exploited to ensure positivity of the chosen function. The following Corollary defines a free function that can be utilised to this effect.

\begin{corollary}\label{cor:PotEnergyDecomp}
	Suppose that there exists a full rank matrix-valued function $K(q)\in\mathbb{R}^{m\times m}$ such that the integral
	\begin{equation}\label{GammaDef}
		\begin{split}
			\Gamma(q)
			&=
			\int K(q)G^\top\left[M_a^{-1}(q) + M^{-1}(q)\right]M(q) \ dq,
		\end{split}
	\end{equation}
	exists. The desired closed-loop potential energy can be chosen as
	\begin{equation}\label{VdDecomp}
		V_d(q)
		=
		V_{m}(q) + V_{f}(\Gamma(q)),
	\end{equation}
	where $V_m(\cdot)$ must be chosen to satisfy the potential energy matching conditions \eqref{matching:PE} and $V_{f}(\cdot)$ is a free function that does not impact the matching equations. Consequently, the matching equation \eqref{matching:PE} can be equivalently written as
	\begin{equation}\label{matching:PE_aug}
		\begin{split}
			s_1(q)G^\perp\nabla_q V
			&=
			-s_2(q)G^\top \nabla_{q}V_m - s_3(q)G^\perp \nabla_{q}V_m \\
			&=
			-D(q)M^{-1}(q)\nabla_q V_m.
		\end{split}
	\end{equation}
\end{corollary}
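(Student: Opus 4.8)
The plan is to show that the extra term $V_f(\Gamma(q))$ is invisible to the potential-energy matching equation, i.e. that $D(q)M^{-1}(q)\nabla_q\big(V_f(\Gamma(q))\big)=0_{(n-m)\times 1}$ for every admissible $V_f$. Granting this, substituting $V_d=V_m+V_f\circ\Gamma$ into \eqref{matching:PE} and using linearity of $\nabla_q$ gives $D(q)M^{-1}(q)\nabla_q V_d=D(q)M^{-1}(q)\nabla_q V_m$, so \eqref{matching:PE} holds for $V_d$ if and only if \eqref{matching:PE_aug} holds for $V_m$. The two displayed forms of each equation agree because $DM^{-1}=[\,s_2\ \ s_3\,]=s_2G^\top+s_3G^\perp$ (the block identity already used in \eqref{matching:PE}), which is valid for the gradient of any scalar function; hence the two versions are genuinely equivalent and $V_f$ can be chosen freely, e.g. to render $V_d$ positive.

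The only nontrivial step is differentiating $V_f\circ\Gamma$. First I would note that the standing hypothesis --- that the line integral \eqref{GammaDef} exists --- is precisely the statement that each row of $K(q)G^\top[M_a^{-1}(q)+M^{-1}(q)]M(q)$ is an exact $1$-form, so that $\Gamma:\mathbb{R}^n\to\mathbb{R}^m$ is a well-defined map with Jacobian $\tfrac{\partial\Gamma}{\partial q}(q)=K(q)G^\top[M_a^{-1}(q)+M^{-1}(q)]M(q)$. By the chain rule, and using symmetry of $M$, $M^{-1}$, $M_a^{-1}$,
\begin{equation*}
\nabla_q\big(V_f(\Gamma(q))\big)=\Big(\tfrac{\partial\Gamma}{\partial q}\Big)^{\!\top}\nabla_\Gamma V_f = M(q)\big[M_a^{-1}(q)+M^{-1}(q)\big]\,G\,K^\top(q)\,\nabla_\Gamma V_f .
\end{equation*}
Left-multiplying by $D(q)M^{-1}(q)$ cancels $M^{-1}M=I_n$ and leaves $D(q)\big[M_a^{-1}(q)+M^{-1}(q)\big]G\,K^\top(q)\nabla_\Gamma V_f$, so it remains to recall that $D(q)$ is a left annihilator of $\big[M_a^{-1}(q)+M^{-1}(q)\big]G$ --- the very property used in the proof of Proposition \ref{prop:matchingEquations} when passing from \eqref{altMatching1} to \eqref{altMatching:KE1}--\eqref{altMatching:PE1}. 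This follows directly from \eqref{Ddef}, since the first $m$ columns of $M_a^{-1}+M^{-1}$ partition as $(m_{11}+m_{a11})$ over $(m_{21}+m_{a21})$ and therefore $D\big[M_a^{-1}+M^{-1}\big]G=(m_{21}+m_{a21})(m_{11}+m_{a11})^{-1}(m_{11}+m_{a11})-(m_{21}+m_{a21})=0_{(n-m)\times m}$. Hence $DM^{-1}\nabla_q(V_f\circ\Gamma)=0$, as claimed.

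I expect the proof to present no genuine obstacle: the only delicate point is justifying $\partial\Gamma/\partial q=KG^\top[M_a^{-1}+M^{-1}]M$, which is just a restatement of the hypothesis that \eqref{GammaDef} defines a bona fide (path-independent) function of $q$, together with the symmetry of $M_a^{-1}$ --- implicit in the construction, since $M_d^{-1}=M^{-1}+M_a^{-1}$ is required to be symmetric positive definite. Everything downstream is the one-line cancellation above, so the content of the corollary is really the observation that the free function $V_f$ enters $V_d$ only through the annihilated direction $\big[M_a^{-1}+M^{-1}\big]G$.
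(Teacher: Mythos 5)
Your proposal is correct and follows essentially the same route as the paper: apply the chain rule to get $\nabla_q(V_f\circ\Gamma)=M\left[M_a^{-1}+M^{-1}\right]GK^\top\nabla_\Gamma V_f$, then use the fact that $D(q)$ annihilates $\left[M_a^{-1}+M^{-1}\right]G$ (which you verify by the same block computation underlying \eqref{Ddef}) so that the free term drops out of \eqref{matching:PE}, leaving \eqref{matching:PE_aug}. Your additional remarks on exactness of the integrand in \eqref{GammaDef} and on $DM^{-1}=\begin{bmatrix} s_2 & s_3\end{bmatrix}$ are consistent with the paper's definitions and only make explicit what the paper leaves implicit.
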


\begin{proof}
	Computing the gradient of $V_d$ results in
	\begin{equation}\label{nablaVa}
		\begin{split}
			\nabla_q V_d
			&=
			\nabla_q V_m
			+
			\frac{\partial^\top\Gamma}{\partial q}\nabla_\Gamma V_f \\
			&=
			\nabla_q V_m
			+
			M\left[M_a^{-1} + M^{-1}\right]GK^\top\nabla_\Gamma V_f.
		\end{split}
	\end{equation}
	From the definition of $D(q)$ in \eqref{Ddef}, we have the identity
	\begin{equation}
		\begin{split}
			D(q)M^{-1}M(q)\left[M_a^{-1}(q) + M^{-1}(q)\right]G
			&=
			\begin{bmatrix}
				I_{m} & \star
			\end{bmatrix}
			G \\
			&=
			0_{(n-m)\times m}
		\end{split}
	\end{equation}
	Substituting the expression \eqref{nablaVa} into \eqref{matching:PE} and noting the above expression results in the simplified matching equation \eqref{matching:PE_aug}. 
\end{proof}

\begin{remark}
	$V_f(\cdot)$ is a free function precisely because $\Gamma$ is an integral of the passive output $y_{c2}$. The potential energy $V_f$ could be alternatively constructed as a capacitor element added to the input $u_{c2}$ in Figure \ref{underactuatedMassShaping}.
\end{remark}

Now we arrive at one of the key results of this work, the equivalence of the proposed CbI scheme and total energy-shaping control of underactuated mechanical systems. Assuming that the CbI scheme has been constructed to satisfy the required matching conditions to ensure the existence of a Casimir of the form $q_a = q, p_a = p$, Proposition \ref{prop:casimir} is applied to reconstruct the reduced closed-loop structure \eqref{IDAPBCsystem}.

\begin{proposition}\label{Prop:stability}
	Consider the underactuated mechanical system with virtual input \eqref{OLsystem_alt} and assume that $M_a(q), V_d(q)$ are chosen such that the conditions of Proposition \ref{prop:matchingEquations} are satisfied in some neighbourhood of a point $(q,p) = (q^\star, 0_{n\times 1})$. If the control signal is chosen as
	\begin{equation}
		\begin{split}
			u(q,p)
			=&
			v - G^\top\left\lbrace
			M_d(q)M^{-1}(q)
			\left[
			-E^\top(q,p)M_a^{-1}(q)p\right.\right. \\ 
			&\left.\left.+ \nabla_{q_a}H_a(q_a,p_a)
			- M(q)J(q,p)p\right]
			- \nabla_q V\right\rbrace
		\end{split}
	\end{equation}
	where
	\begin{equation}
		M_d(q) = \left[M_a^{-1}(q) + M^{-1}(q)\right]^{-1},
	\end{equation}
	the following hold:
	\begin{itemize}
	\item The closed-loop dynamics have the form
	\begin{equation}\label{MechSystemReduced}
		\begin{split}
			\begin{bmatrix}
				\dot q \\
				\dot p
			\end{bmatrix}
			=&
			\begin{bmatrix}
				0_{n\times n} & M^{-1}(q)M_d(q) \\
				-M_d(q)M^{-1}(q) & J_2(q,p)
			\end{bmatrix}
			\begin{bmatrix}
				\nabla_q H_d \\
				\nabla_p H_d
			\end{bmatrix} \\
			&+
			\begin{bmatrix}
				0_{n\times m} \\
				G
			\end{bmatrix}
			v \\
			H_d(q,p&)
			=
			\frac12 p^\top M_d^{-1}(q)p + V_d(q) \\
			&y
			=
			G^\top\nabla_p H_d,
		\end{split}
	\end{equation}
	where
	\small
	\begin{equation}
		\begin{split}
			J_2(q,p)
			=&
			M_d(q)\left\lbrace J(q,p) + M^{-1}(q)\left[E(q,p) - E^\top(q,p)\right]\right.\\
			&\left.\times M^{-1}(q)\right\rbrace M_d(q) + M_d(q)M^{-1}(q)E^\top(q,p)\\
			& - E(q,p)M^{-1}(q)M_d(q)
		\end{split}
	\end{equation}
	\normalsize
	
	\item 
	If $M_d(q), V_d(q)$ satisfy
		\begin{equation}
			\begin{split}
				M_d(q) > 0, \ \ V_d(q) > 0
			\end{split}
		\end{equation}
		in some neighbourhood of $(q,p) = (q^\star, 0_{n\times 1})$, $(q^\star, 0_{n\times 1})$ is a stable equilibrium of the closed-loop system for $v = 0_{m\times 1}$.
		
		\item If the input signal $v$ is used for damping injection
		\begin{equation}
			v = -K_dG^\top y
		\end{equation}
		for some positive $K_d\in\mathbb{R}^{m\times m}$ and the equilibrium $(q,p) = (q^\star, 0_{n\times 1})$, $(q^\star, 0_{n\times 1})$ is locally detectable from the output $y$, the point $(q^\star, 0_{n\times 1})$ is asymptotically stable.
	\end{itemize} 
\end{proposition}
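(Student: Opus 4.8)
The plan is to verify Proposition~\ref{Prop:stability} by combining the Casimir machinery of Proposition~\ref{prop:Casimir} with the reduction result of Corollary~\ref{corr:casimir}, then treating the stability claims via standard Lyapunov arguments with $H_d$ as the candidate. First I would confirm that the stated control signal $u(q,p)$ is precisely the one generated by the bond graph interconnection~\eqref{mechInterconnection}, i.e.\ that it coincides with~\eqref{controlDef} once the simplified expressions~\eqref{ABCdefs_proj} are substituted for $C(\cdot)$; since Proposition~\ref{prop:matchingEquations} gives conditions~\eqref{matching:KE}--\eqref{matching:PE} equivalent to the matching condition~\eqref{matching}, Proposition~\ref{prop:Casimir} then applies and the Casimir $q_{a1}=q_{a2}=q$, $p_a=p$ holds (note $v = u_{c2}$ here). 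With the Casimir in force, I would invoke Corollary~\ref{corr:casimir} to collapse the closed-loop~\eqref{cbi:closedLoop} with the controller matrices~\eqref{controlMatDefs} down to a system on $x_p=(q,p)$ of the form~\eqref{casimir:ReducedDynamics}, and then identify the reduced Hamiltonian $H_r(x_p)=H_p+H_c(f_c(x_p))$ with $H_d$ and the reduced structure matrix $F_r$ with the target $J-R$ structure in~\eqref{MechSystemReduced}.

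The first bullet then reduces to an algebraic computation: substituting~\eqref{controlMatDefs} into~\eqref{FbarDefs}, choosing $B$ appropriately, and forming the Schur complement~\eqref{casimir:reducedMats}. I would organize this by blocks. The $\dot q$ row should yield $M^{-1}M_d$ multiplying $\nabla_p H_d$; this follows from the identity $M_d^{-1}=M^{-1}+M_a^{-1}$ together with the fact that after reduction $\nabla_{p_a}H_a|_{p_a=p}=M_a^{-1}p$ and $y_v = M^{-1}p$ combine to $M_d^{-1}p = \nabla_p H_d$, mirroring the calculation~\eqref{Adef} already done in the proof of Proposition~\ref{prop:Casimir}. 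The $\dot p$ row gives the input term $Gv$ and the skew term $J_2(q,p)$; verifying that the Schur-complement bookkeeping produces exactly the stated $J_2$—with its three groups of terms involving $E-E^\top$, $M_d M^{-1}E^\top$, and $E M^{-1}M_d$—is the most tedious part, but it is a direct expansion using~\eqref{Edef} and~\eqref{HaGrad}; skew-symmetry of $J_2$ is inherited from $F_r+F_r^\top\le 0$ via Lemma~\ref{lem:schurNegDef} (the symmetric part must vanish because $v=0$ leaves no damping source), so I only need to check the off-diagonal/skew content rather than sign-definiteness.

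For the second bullet, once the closed loop is in the form~\eqref{MechSystemReduced} with $v=0$, the power balance gives $\dot H_d = \nabla_p H_d^\top J_2 \nabla_p H_d = 0$ by skew-symmetry of $J_2$, and the hypotheses $M_d(q)>0$, $V_d(q)>0$ in a neighbourhood of $(q^\star,0)$ make $H_d$ a valid local Lyapunov function (positive definite about the equilibrium, where $\nabla H_d$ vanishes because $q^\star$ is a critical point of $V_d$), so stability follows from Lyapunov's theorem. For the third bullet, with $v=-K_d G^\top y = -K_d G^\top\nabla_p H_d$ and $K_d>0$ the balance becomes $\dot H_d = -y^\top K_d y \le 0$; invoking LaSalle's invariance principle, trajectories converge to the largest invariant set where $y\equiv 0$, and the assumed local detectability of $(q^\star,0)$ from $y$ forces that set to be the equilibrium, yielding asymptotic stability.

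\textbf{Main obstacle.} The principal difficulty I anticipate is the block-algebra in the first bullet: showing that the Schur complement in~\eqref{casimir:reducedMats}, built from the ten $\bar F_{\star\star}$ expressions in~\eqref{FbarDefs} with the controller matrices~\eqref{controlMatDefs} substituted, collapses exactly to the matrix in~\eqref{MechSystemReduced} and in particular reproduces the stated $J_2(q,p)$. The bookkeeping with the factor $\partial f_c/\partial x_p = [\,I_n\ \ I_n\ \ 0_{n\times n}\,]^\top$-type blocks and the cancellations coming from $M_d^{-1}=M^{-1}+M_a^{-1}$ is where errors are most likely to hide; everything downstream (the two stability claims) is then routine energy-based reasoning.
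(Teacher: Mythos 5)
Your proposal follows essentially the same route as the paper's proof: identify the closed loop as \eqref{cbi:closedLoop} with the controller data \eqref{controlMatDefs}, invoke the Casimir and Corollary \ref{corr:casimir} with a suitable $B$, and carry out the Schur-complement block algebra \eqref{casimir:reducedMats} to recover \eqref{MechSystemReduced} with the stated $J_2$. The only (inessential) difference is that the paper dispatches the two stability bullets by citing Proposition 1 of \cite{Ortega2002} rather than spelling out the Lyapunov/LaSalle argument as you do, and note that skew-symmetry of $J_2$ can be read off directly from its explicit formula (or from losslessness of all interconnected elements) rather than from $F_r+F_r^\top\leq 0$ alone.
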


\begin{proof}
	Interconnection of the mechanical system with the control subsystem results in a closed-loop of the form \eqref{cbi:closedLoop}, where $x_c$, $H_c$ and $K_{\star\star}$ are defined in \eqref{controlMatDefs} and
	\begin{equation}
		\begin{split}
			x_p
			=&
			\begin{bmatrix}
				q \\ p
			\end{bmatrix} \\
			F_p
			=&
			\begin{bmatrix}
				0_{n\times n} & I_n \\
				-I_n & 0_{n\times n}
			\end{bmatrix} \\
			G_p
			=&
			\begin{bmatrix}
				0_{n\times n} \\ I_n
			\end{bmatrix}.
		\end{split}
	\end{equation}
	From \eqref{mechCasimir}, we have that
	\begin{equation}
		\frac{\partial f_c}{\partial x_p} 
		=
		\begin{bmatrix}
			I_n & 0_{n\times n} \\
			I_n & 0_{n\times n} \\
			0_{n\times n} & I_n
		\end{bmatrix}.
	\end{equation}
	To verify the claim, Corollary \ref{corr:casimir} is applied which requires a suitable definition of $B$. Expanding the definitions of $\bar F_{\star 3}$ from \eqref{FbarDefs} reveals
	\begin{equation}
		\begin{split}
			\bar F_{13}
			&=
			\begin{bmatrix}
				0_{n\times n} & 0_{n\times n} & -I_n \\
				0_{n\times n} & I_n-M_dM^{-1} & D
			\end{bmatrix} \\
			\bar F_{23}
			&=
			\begin{bmatrix}
				0_{n\times n} & 0_{n\times n} & 0_{n\times n}
			\end{bmatrix} \\
			\bar F_{33}
			&=
			\begin{bmatrix}
				0_{n\times n} & 0_{n\times n} & 0_{n\times n} \\
				0_{n\times n} & 0_{n\times n} & I_n \\
				0_{n\times n} & -I_n & 0_{n\times n} \\
			\end{bmatrix},
		\end{split}
	\end{equation}
	resulting in the choice
	\begin{equation}
		B
		=
		\begin{bmatrix}
			0_{n\times n} & 0_{n\times n} \\
			I_n & 0_{n\times n} \\
			0_{n\times n} & I_n \\
		\end{bmatrix}.
	\end{equation}
	Expanding the expression $B^\top\bar F_{33} B$ results in
	\begin{equation}
		B^\top\bar F_{33} B
		=
		\begin{bmatrix}
			0_{n\times n} & -I_n \\
			I_n & 0_{n\times n}
		\end{bmatrix}
	\end{equation}
	which is invertible, ensuring that Corollary \ref{corr:casimir} can be applied. Expanding the definitions of $F_r$ in \eqref{casimir:reducedMats} results in the reduced dynamics
	\begin{equation}
		\begin{split}
			\begin{bmatrix}
				\dot q \\ \dot p \\ -y
			\end{bmatrix}
			&=
			\underbrace{
			\begin{bmatrix}
				0_{n\times n} & M^{-1}M_d & 0_{n\times n} \\
				-M_dM^{-1} & \bar J_2 & G \\
				0_{n\times n} & -G^\top & 0_{n\times n}
			\end{bmatrix}
			}_{F_r}
			\begin{bmatrix}
				\nabla_q H_d \\ \nabla_p H_d \\ v	
			\end{bmatrix} \\
			\bar J_2
			&=
			M_dJM_d+D-D^\top+M_dM^{-1}D^\top-DM^{-1}M_d,
		\end{split}
	\end{equation}
	which agrees with \eqref{MechSystemReduced} when substituting in the definition for $D$ in \eqref{controlMatDefs}. Stability and asymptotic stability of the point $(q^\star, 0_{n\times 1})$ follows from Proposition 1 of \cite{Ortega2002}.
\end{proof}

\begin{remark}
	From Proposition \ref{Prop:stability} it is clear that $M_a^{-1}(q)$ does not need to be a positive matrix. Rather, the closed-loop mass $M_d^{-1}$ must be positive to ensure stability fo the system. In cases that $M_a^{-1}$ is positive, the control sub-system in Figure \ref{underactuatedMassShaping} is passive.
\end{remark}

\section{Example applications}\label{sec:examples}
In this section the matching conditions derived in Proposition \ref{prop:matchingEquations} are used to construct stabilising control laws for the cart-pole and acrobot systems. In both cases, the mass matrix depends on only one configuration variable, so the kinetic energy matching conditions can be reduced to ODEs as detailed in Corollary \ref{corr:KE_ODE}. This enables the solutions to be constructed numerically, removing the need to analytically solve the equations.

Both examples were prepared in Matlab 2022a and the source code is available via \href{https://github.com/JoelFerguson/Underactuated_Mechanical_CbI}{https://github.com/JoelFerguson/Underactuated\_Mechanical\_CbI}.

\subsection{Cart-pole example}\label{Sec:example:cartPole}
The cart-pole system, shown in Figure \ref{fig:cartPole}, attempts to balance the pole of length $\ell$ and mass $m_p$ in the upright position by applying a force $F$ to the cart with mass $m_c$. The state $q_1$ describes the horizontal displacement of the cart whereas $q_2$ describes the angle of the pole from vertical in the clockwise direction.
\begin{figure}[h!]
	\centering
	\includegraphics[width=0.5\columnwidth]{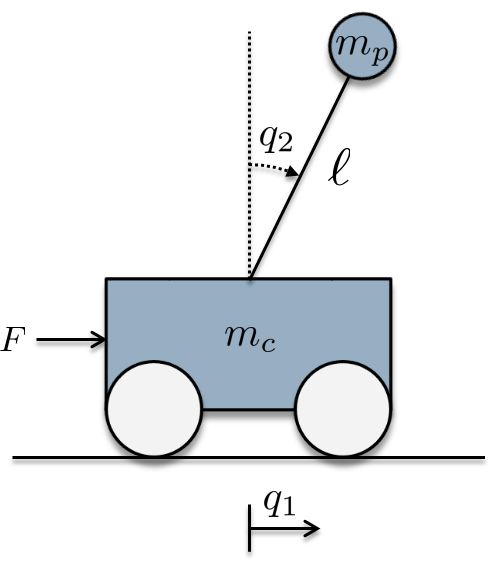}
	\caption{The cart-pole system attempts to balance the pole in the upright position by regulating the force $F$.}
	\label{fig:cartPole}
\end{figure}
The cart-pole system can be written as a pH system of the form \eqref{OLsystem} with
\begin{equation}\label{cartPole:model}
	\begin{split}
		q
		&=
		\begin{bmatrix}
			q_1 \\ q_2
		\end{bmatrix} \\
		M(q)
		&=
		\begin{bmatrix}
		m_c+m_p & m_p l\cos q_2 \\
		m_p l\cos q_2 & m_p l^2
		\end{bmatrix} \\
		V(q)
		&=
		m_p gl\cos q_2 \\
		G
		&=
		\begin{bmatrix}
			1 \\ 0
		\end{bmatrix}.
	\end{split}
\end{equation}
In the subsequent control design, the parameters $m_c = m_p = l = 1, g = 9.8$ have been used.

The mass matrix of the cart-pole system depends only on $q_2$, the unactuated coordinate. The added inverse mass is assumed to also be a function of $q_2$ also, allowing it to be written as
\begin{equation}
	M_a^{-1}(q_2)
	=
	\begin{bmatrix}
		m_{a11}(q_2) & m_{a21}^\top(q_2) \\
		m_{a21}(q_2) & m_{a22}(q_2)
	\end{bmatrix}.
\end{equation}
As noted in Corollary \ref{corr:KE_ODE}, the kinetic energy matching equations \eqref{matching:KE} can be reduced to an ODE as both $M^{-1}, M_a^{-1}$ are a function of only one variable. The associate ODE is of the form \eqref{underactuated1ODE} for $q_i = q_2$ where $m_{a11}(q_2)$ is a free function to be chosen. The ODE can be evaluated using numerical solvers.

Before solving the ODE associated with the kinetic energy matching equations, consideration should be given to how the resulting mass matrix impacts the closed-loop potential energy $V_d$. Recalling \eqref{VdDecomp}, the closed-loop potential energy is composed of a free term $\Gamma(\cdot)$ and a term $V_m(q)$ which must satisfy \eqref{matching:PE_aug}, where $s_1, s_2, s_3$ are defined in \eqref{s1Def}, \eqref{s2s3Def}. As the potential $V, M^{-1}, M_a^{-1}$ are all functions of only $q_2$, $V_m$ is also assumed to be a function of $q_2$ only, reducing \eqref{matching:PE_aug} to the ODE
\begin{equation}\label{cartPole:potentialMatching}
	\begin{split}
		\nabla_{q_2}V_m
		&=
		-\frac{s_1(q_2)}{s_3(q_2)}\nabla_{q_2} V,
	\end{split}
\end{equation}
which can be evaluated numerically once a solution for $M_a^{-1}(q_2)$, and hence $s_1(\cdot), s_3(\cdot)$, are found. noting that the vector field $\nabla_{q_2} V$ is divergent from the point $q_2 = 0$, the closed-loop vector field $\nabla_{q_2}V_m$ should reverse the direction locally. This is ensured if the ratio $\frac{s_1(q)}{s_3(q)}$ is positive in some neighbourhood of the origin. Recalling that $s_1(q)$ is the Schur complement of $M^{-1} + M_a^{-1}$, which is necessarily positive, it is required that $s_3(q)$ be positive in some neighbourhood of $q_2 = 0$. The values
\begin{equation}\label{massMatInitCond}
	\begin{split}
		m_{a11}(0) &= 0 \\
		m_{a21}(0) &= -2 \\
		m_{a22}(0) &= 8,
	\end{split}
\end{equation}
where chosen which result in $s_1(0) = 1$, $s_3(0) = 1$ and $\lambda_{min}\left[M^{-1}(0) + M_a^{-1}(0)\right] = 0.917 > 0$.

The added inverse mass matrix can now be found by numerically evaluating the ODE \eqref{underactuated1ODE}. The term $m_{a11}(q_2)$ is a free function that was chosen to be constant $m_{a11}(q_2) = 0, \frac{\partial}{\partial q_2}m_{a11} = 0$ for this example. The resulting functions for $m_{a21}(q_2), m_{a22}(q_2)$ were found to exist on the interval $q_2\in\left[-0.48, 0.48\right]$ and are shown in Figure \eqref{fig:addedMass}. From Proposition \ref{Prop:stability}, $M^{-1}(q_2) + M_a^{-1}(q_2)$ should be positive to ensure stability, so the minimum eigenvalue of this expression is shown in the same figure.
\begin{figure}[h!]
	\centering
	\includegraphics[width=1.0\columnwidth]{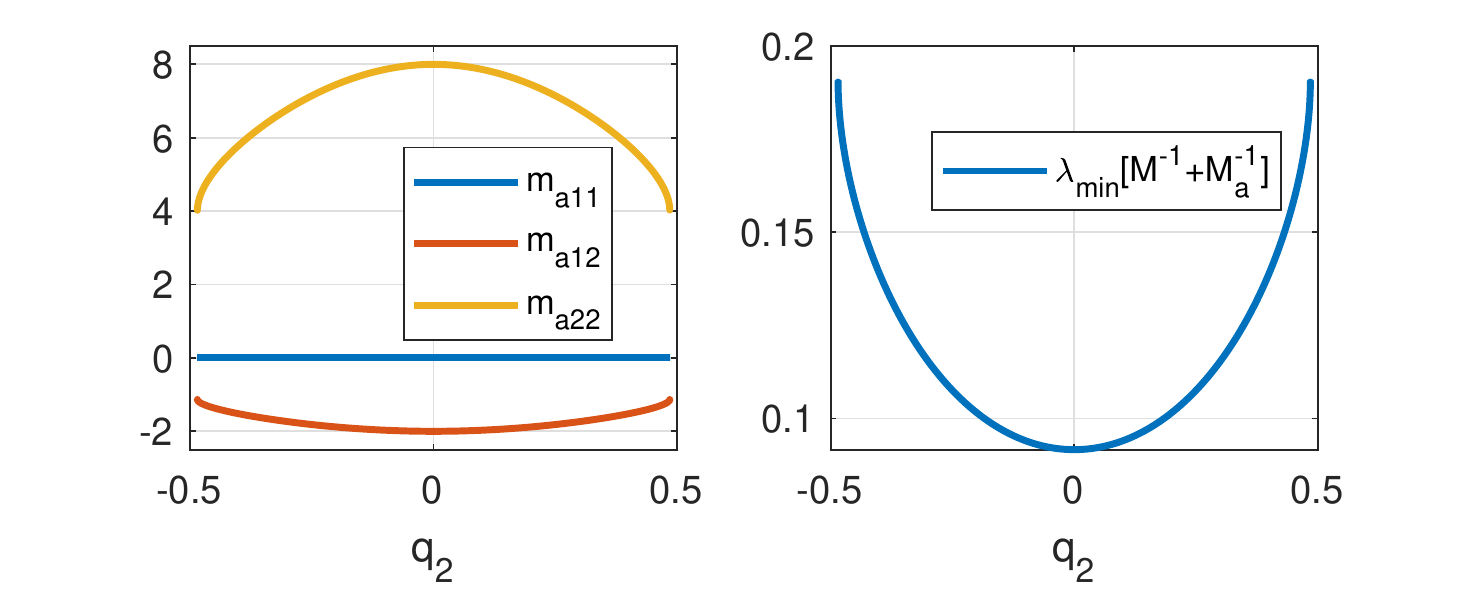}
	\caption{A solution for the inverse added mass $M_a^{-1}$ was found to exist for the cart-pole system on the domain $q_2$ between $\pm 0.48$ radians.}
	\label{fig:addedMass}
\end{figure}

The closed-loop potential energy $V_m(q_2)$ can now be obtained by numerically by evaluating the ODE \eqref{cartPole:potentialMatching}. The terms $s_1(\cdot), s_3(\cdot)$ are evaluated using the solutions to $M_a^{-1}$ shown in Figure \eqref{fig:addedMass}. The resulting function $V_m(q_2)$ is shown in Figure \eqref{fig:addedPotEnergy}. As expected, the function is positive in some neighbourhood of $q_2 = 0$ due to the choice of the added mass at $q_2 = 0$ in \eqref{massMatInitCond}.
\begin{figure}[h!]
	\centering
	\includegraphics[width=1.0\columnwidth]{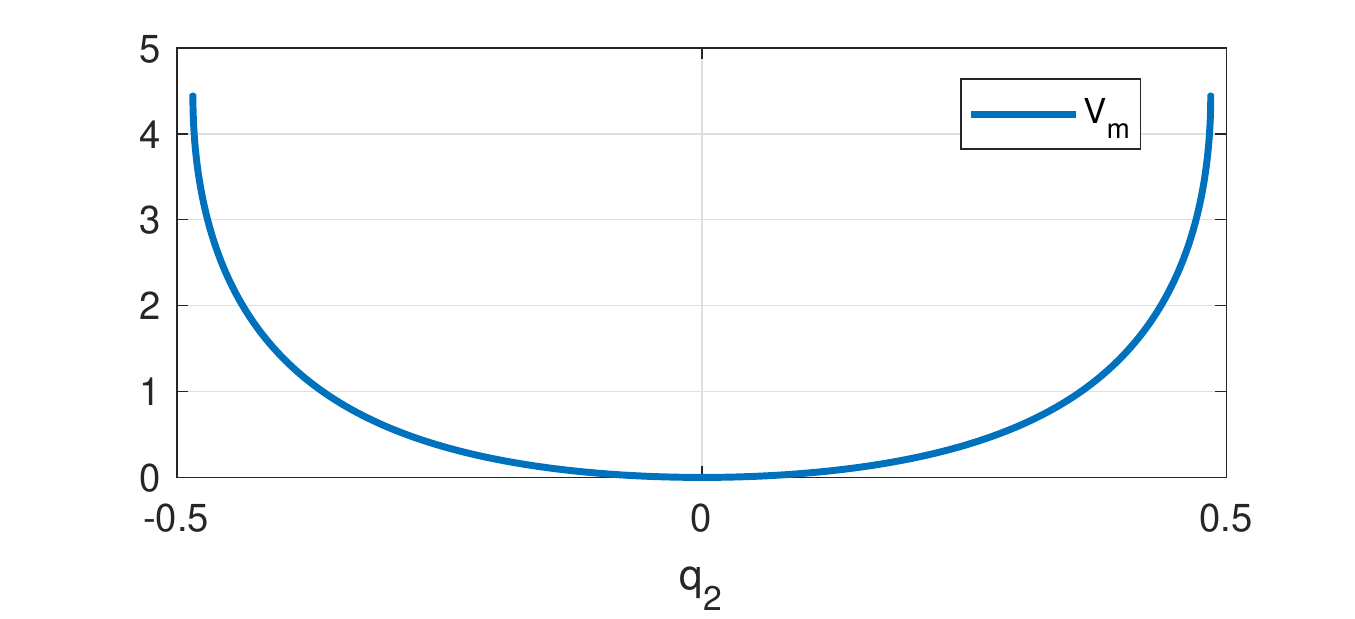}
	\caption{Solution for the added potential energy $V_m(q_2)$ for the cart-pole system.}
	\label{fig:addedPotEnergy}
\end{figure}

The proposed functions of $M_a^{-1}$, $V_m$ can be used to construct a controller to stabilise the pendulum in the upright position. To ensure stability of $q_1 = 0$ also, the free term $V_f(\Gamma(q))$, defined in \eqref{VdDecomp}, is constructed. The function $\Gamma(\cdot)$ defined by the integral \eqref{GammaDef}, where $K(q)$ is a free function chosen to ensure solvability. Noting that $M^{-1}, M_a^{-1}$ are functions of $q_2$ only, the parametrisation
\begin{equation}
	\begin{split}
		\begin{bmatrix}
			\beta_1(q_2) & \beta_2(q_2)
		\end{bmatrix}
		=
		G^\top\left[M_a^{-1}(q_2) + M^{-1}(q_2)\right]M(q_2)
	\end{split}
\end{equation}
is introduced. The free function is chosen as $K(q_2) = \frac{1}{\beta_1(q_2)}$, resulting in 
\begin{equation}
	\begin{split}
		\Gamma(q)
		&=
		\int
		\begin{bmatrix}
			1 & \frac{\beta_2(q_2)}{\beta_1(q_2)}
		\end{bmatrix}
		\ dq \\
		&=
		q_1 + \int \frac{\beta_2(q_2)}{\beta_1(q_2)} \ dq_2,
	\end{split}
\end{equation}
which can be solved numerically from the initial condition $\Gamma(0_{2\times 1}) = 0$. The function $V_f(\cdot)$ was taken as $V_f(\Gamma(q)) = \frac12\kappa\Gamma(q)^2$ with $\kappa = 5$ for simulation. A contour plot of the resulting closed-loop potential energy is shown in Figure \eqref{fig:addedPotEnergyVd}. Note that a minimum has been assigned to $q = 0_{2\times 1}$.
\begin{figure}[h!]
	\centering
	\includegraphics[width=1.0\columnwidth]{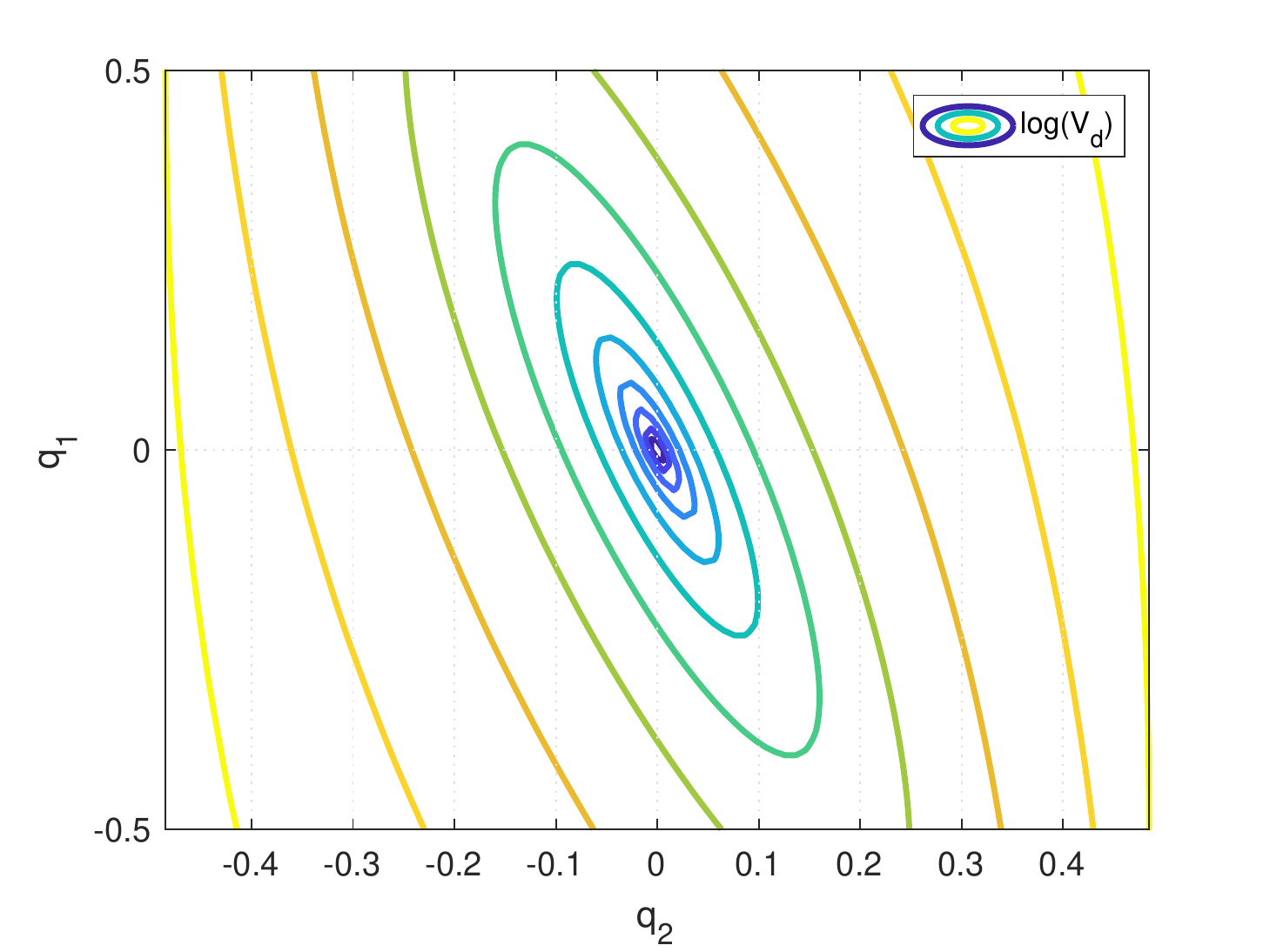}
	\caption{Contour plot of the closed-loop potential energy $V_d(q) = V_m(q_2) + V_f(\Gamma(q))$ for the cart-pole system on log scale.}
	\label{fig:addedPotEnergyVd}
\end{figure}
As a final control design stage, damping is injected via the new passive input/output pair with
\begin{equation}
	v
	=
	-5G^\top (M_a^{-1} + M^{-1})p.
\end{equation}
The complete control signal is defined by the expression \eqref{controlDef}.

The cart-pole system was simulated for 5 seconds from initial conditions $q(0) = (0, 0.3)$, $p(0) = (0,0)$. The resulting state evolution and closed-loop energy $H_d$ is shown in Figure \ref{fig:simulation}. As expected, the proposed controller stabilises the origin and the closed-loop energy $H_d$ decreases monotonically.
\begin{figure}[h!]
	\centering
	\includegraphics[width=1.0\columnwidth]{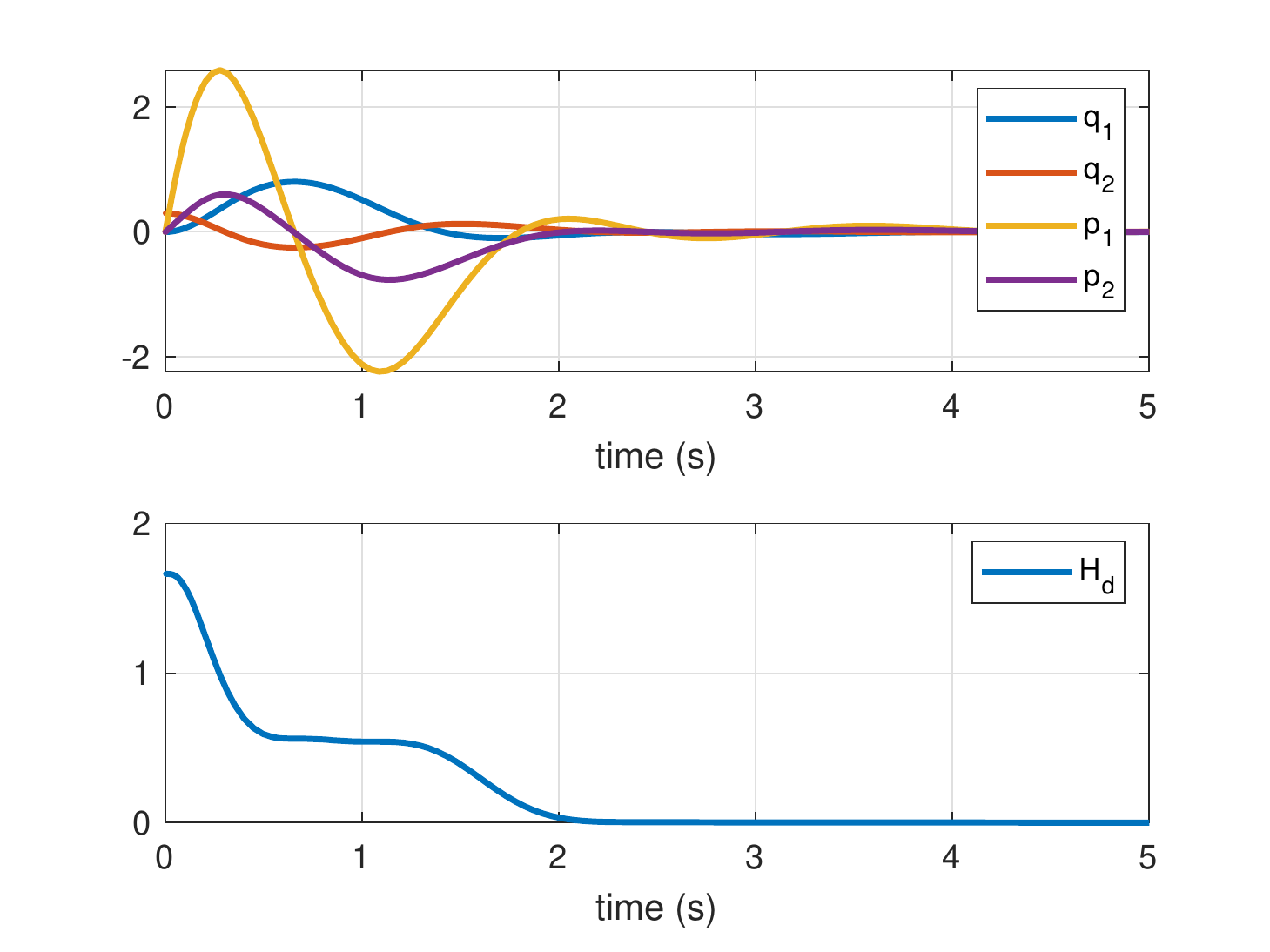}
	\caption{Numerical simulation of cart-pole system in closed-loop with CbI scheme.}
	\label{fig:simulation}
\end{figure}

\subsection{Acrobot example}\label{Sec:example:acrobot}
The acrobot system, shown in Figure \ref{fig:acrobot}, consists of 2 links with an actuator supplying a input torque $\tau$ fixed between the base and second links. The base link has displacement of $q_2$, measured from vertical, length $\ell_2$, mass $m_2$, moment of inertia $J_{\ell1}$ and centre of mass $\ell_{c2}$ from the base pivot point. The actuated link has displacement of $q_1$ measured relative to the base link, length $\ell_1$, mass $m_1$, moment of inertia $J_{\ell1}$ and centre of mass $\ell_{c1}$ from the actuated pivot point.  The control objective of this system is to stabilise the upright equilibrium position $(q_1,q_2) = (0,0)$.
\begin{figure}[h!]
	\centering
	\includegraphics[width=0.5\columnwidth]{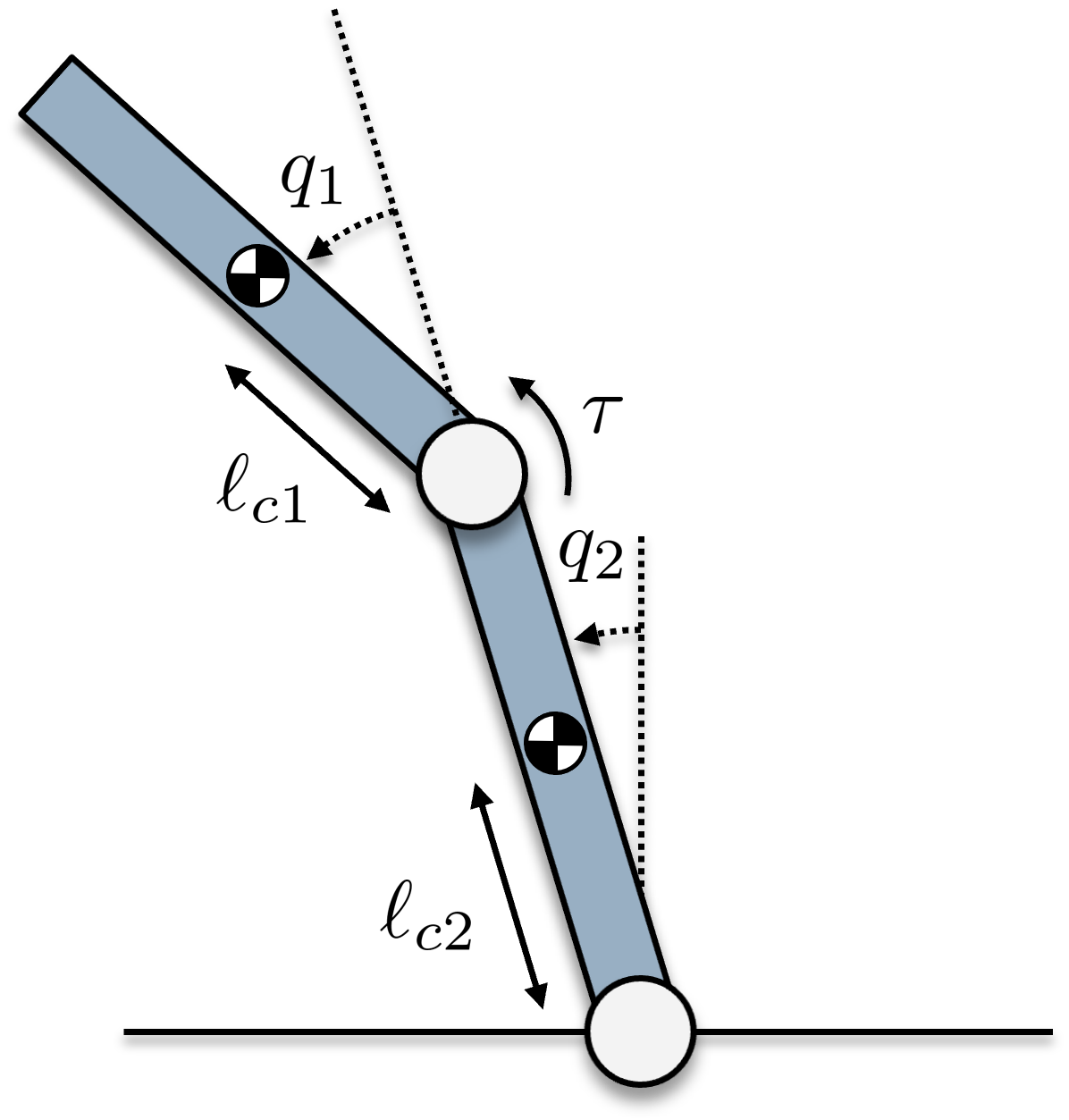}
	\caption{The acrobot system attempts to balance in the vertical position by manipulating the torque generated by an actuator between the two links.}
	\label{fig:acrobot}
\end{figure}
The acrobot system can be written as a pH system of the form \eqref{OLsystem} with
\begin{equation}\label{acrobot:system}
	\begin{split}
		M(q)
		&=
		\begin{bmatrix}
		c_2 & c_2 + c_3\cos q_1 \\
		c_2 + c_3\cos q_1 & c_1 + c_2 + 2c_3\cos q_1
		\end{bmatrix} \\
		V(q)
		&=
		c_4g\cos q_2 + c_5g\cos(q_1+q_2) \\
		G
		&=
		\begin{bmatrix}
			1 \\ 0
		\end{bmatrix},
	\end{split}
\end{equation}
where
\begin{equation}
	\begin{split}
		c_1 &= m_2\ell_{c2}^2 + m_1\ell_2^2 + J_{\ell 2} \\
		c_2 &= m_1\ell_{c1}^2 + J_{\ell 1} \\
		c_3 &= m_1\ell_2\ell_{c1} \\
		c_4 &= m_2\ell_{c2} + m_1\ell_{1} \\
		c_5 &= m_1\ell_{c1}.
	\end{split}
\end{equation}
For the purposes of simulation, we take the values $g = 9.8$, $c_1 = 2.3333, c_2 = 5.3333, c_3 = 2, c_4 = 3, c_5 = 2$ which were previously used in \cite{Mahindrakar2010}, \cite{Donaire2017a}.

In this example, the total energy-shaping controller proposed in \cite{Mahindrakar2010} is reconstructed as a CbI control scheme by solving the matching conditions of Proposition \ref{prop:matchingEquations}. In that work, the closed-loop mass matrix was chosen to be the constant matrix
\begin{equation}\label{MdPrevious}
	\begin{split}
		M_d^{-1}
		&=
		\begin{bmatrix}
			0.3385 & -0.9997 \\
			-0.9997 & 5.9058 \\
		\end{bmatrix}
	\end{split}
\end{equation}
which will be recovered in subsequent computations. 

The mass matrix of the acrobot system depends only on $q_1$, the actuated coordinate. The added inverse mass matrix is assumed to be a function of only $q_1$ also, resulting in the structure
\begin{equation}
	M_a^{-1}(q_1)
	=
	\begin{bmatrix}
		m_{a11}(q_1) & m_{a21}^\top(q_1) \\
		m_{a21}(q_1) & m_{a22}(q_1)
	\end{bmatrix}.
\end{equation}
As the system is underactuated degree 1 and the mass matrix is a function of only one variable, the kinetic energy matching equations can be reduced to an ODE as per Corollary \ref{corr:KE_ODE}. The resulting ODE has the form \eqref{underactuated1ODE} with $q_i = q_1$ and where $m_{a11}(q_1)$ is a free function. 

In order to recover the result \eqref{MdPrevious}, this free function $m_{a11}(q_1)$ is chosen as 
\begin{equation}
	\begin{split}
		m_{a11}(q_1)
		&=
		G^\top\left[M_d^{-1} - M^{-1}(q_1)\right]G \\
		&=
		0.3385 - \frac{c_1 + c_2 + 2c_3\cos q_1}{c_1c_2 - c_3^2\cos^2(q_1)}.
	\end{split}
\end{equation}
The initial conditions $m_{a12}(0), m_{a22}(0)$ are similarly defined as
\begin{equation}
	\begin{split}
		m_{a12}(0)
		&=
		G^\perp\left[M_d^{-1} - M^{-1}(0)\right]G = -0.1313 \\
		m_{a12}(0)
		&=
		G^\perp\left[M_d^{-1} - M^{-1}(0)\right]G^{\perp\top} = 5.2743.
	\end{split}
\end{equation}
The added inverse mass was evaluated numerically and the results are shown in Figure \ref{fig:acro_addedMass}. As the previously reported solution \eqref{MdPrevious} is globally defined, it is unsurprising that the inverse added mas is also globally defined. As expected, the minimum eigenvalue of $M^{-1} + M_a^{-1}$ is constant also.
\begin{figure}[h!]
	\centering
	\includegraphics[width=1.0\columnwidth]{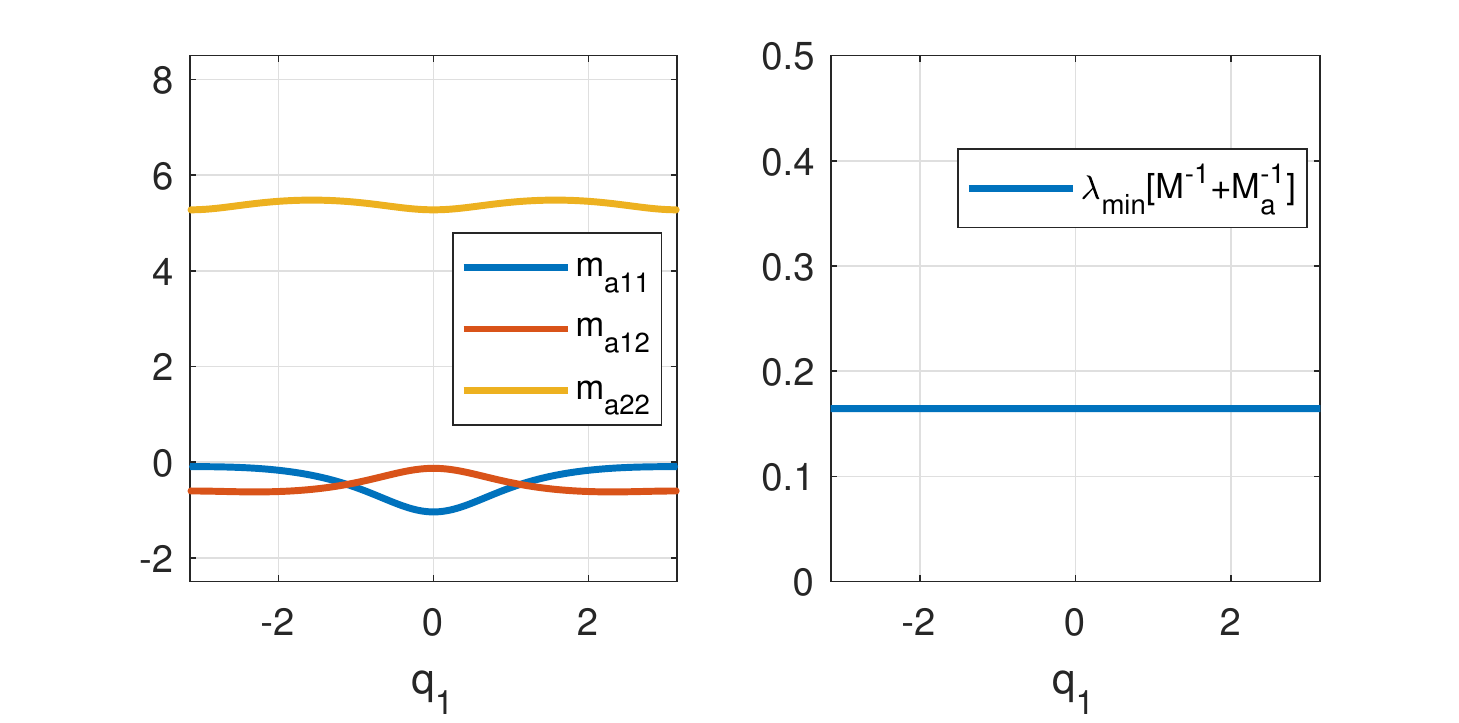}
	\caption{A solution for the added mass $M_a^{-1}$ for the acrobot was found to exist globally.}
	\label{fig:acro_addedMass}
\end{figure}

Solving the potential energy PDE \eqref{matching:PE_aug} is difficult due to the open-loop potential energy being a function of both $q_1$ and $q_2$. This dependence implies that $V_m$ cannot be resolved directly using an ODE solver. Considering the structure of $V$ in \eqref{acrobot:system}, it is proposed that the closed-loop energy $V_m$ has the structure
\begin{equation}\label{acrobot:VmDef}
		V_m(q)
		=
		f_1(q_1)\sin(q_2) + f_2(q_1)\cos(q_2),
\end{equation}
which has derivatives
\begin{equation}\label{acrobot:VmDeriv}
	\begin{split}
		\nabla_{q_1}V_a
		=&
		\frac{\partial f_1}{\partial q_1}\sin(q_2) + \frac{\partial f_2}{\partial q_1}\cos(q_2) \\
		\nabla_{q_2}V_a
		=&
		f_1(q_1)\cos(q_2) - f_2(q_1)\sin(q_2).
	\end{split}
\end{equation}
The open-loop potential energy has gradients
\begin{equation}\label{acrobot:Vderiv}
	\begin{split}
		\nabla_{q_1}V
		=&
		-c_5g\sin(q_1)\cos(q_2) - c_5g\cos(q_1)\sin(q_2) \\
		\nabla_{q_2}V
		=&
		-c_4g\sin(q_2) - c_5g\sin(q_1)\cos(q_2) \\
		&- c_5g\cos(q_1)\sin(q_2).
	\end{split}
\end{equation}
Substituting the expressions \eqref{acrobot:VmDeriv} and \eqref{acrobot:Vderiv} into \eqref{matching:PE} and matching coefficients results in the system of equations
\begin{equation}\label{acrobot:fODE}
	\begin{split}
		\begin{bmatrix}
			\frac{\partial f_1}{\partial q_1} \\ \frac{\partial f_2}{\partial q_1}
		\end{bmatrix}
		&=
		\frac{1}{s_2(q_1)} \\
		&\times
		\begin{bmatrix}
			c_4 g s_1(q_1) + c_5 g s_1(q_1)\cos(q_1) + s_3(q_1)f_2(q_1) \\
			c_5 g s_1(q_1)\sin(q_1) - s_3(q_1)f_1(q_1)
		\end{bmatrix},
	\end{split}
\end{equation}
which can be evaluated numerically. The values of $f_1, f_2$ at the origin should be chosen to ensure that the origin is an equilibrium point and $V_m$ is positive in $q_2$. Considering the expressions \eqref{acrobot:Vderiv}, \eqref{acrobot:fODE}, the origin is an equilibrium for $f_1(0) = 0$. The energy function \eqref{acrobot:VmDef} is locally positive with respect to $q_1$ for $f_2(0)$ negative. For the purpose of simulation, $f_2(0) = -50$ was used. The resulting function $V_m$ is shown in Figure \ref{fig:acro_addedPotEnergyVm}. 
\begin{figure}[h!]
	\centering
	\includegraphics[width=1.0\columnwidth]{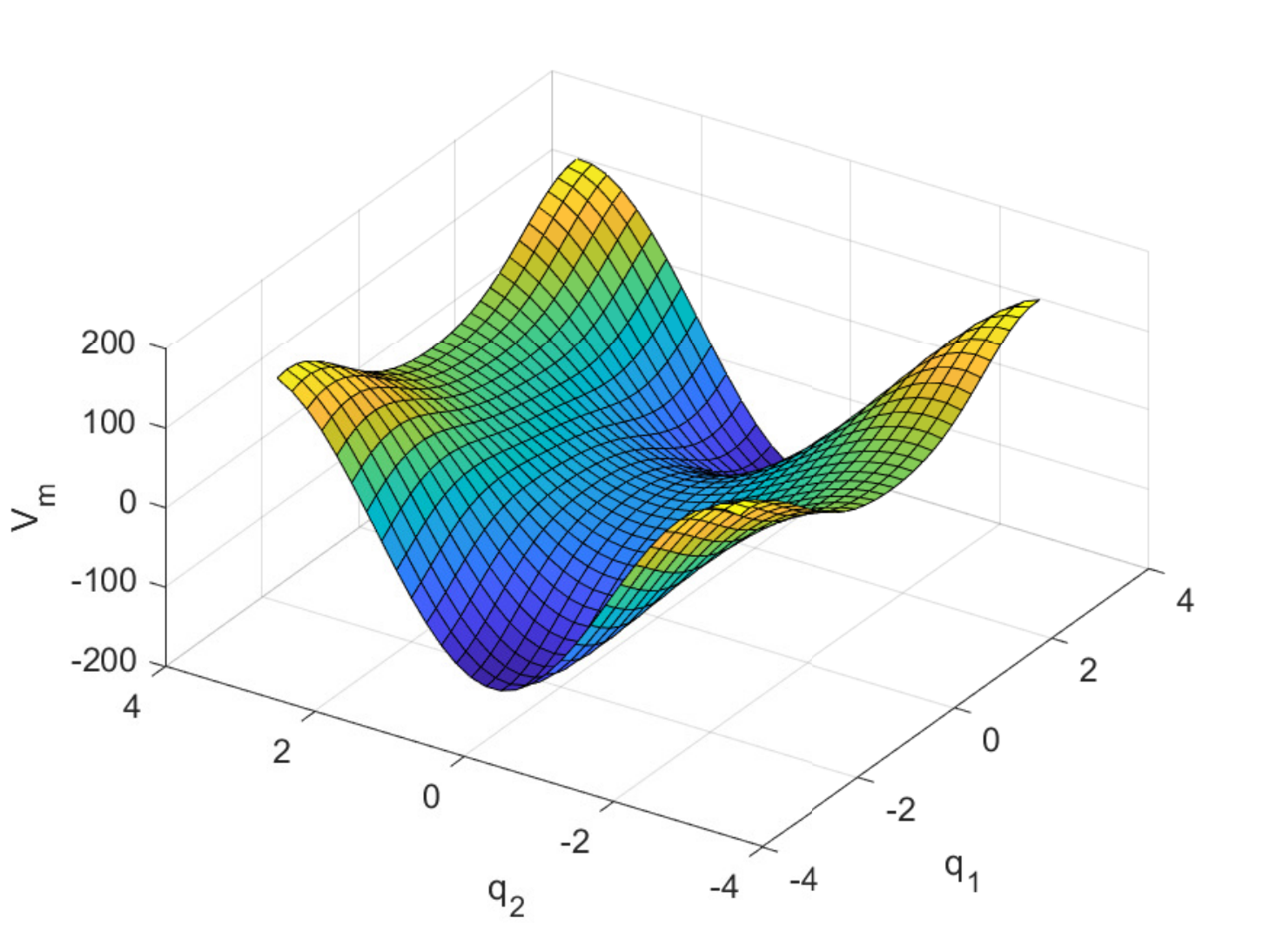}
	\caption{Solution for the added potential energy $V_m(q_2)$.}
	\label{fig:acro_addedPotEnergyVm}
\end{figure}

Considering Figure \ref{fig:acro_addedPotEnergyVm}, it is clear that $V_m$ is not positive definite with respect to the origin. Note, however, that $q_2 = 0$ has been stabilised. To ensure stability of $q_1 = 0$ also, the free term $V_f(\Gamma(q))$, defined in \eqref{VdDecomp}, is constructed. The function $\Gamma(\cdot)$ defined by the integral \eqref{GammaDef}, where $K(q)$ is a free function chosen to ensure solvability. Noting that $M^{-1}, M_a^{-1}$ are functions of $q_1$ only, the parametrisation
\begin{equation}
	\begin{split}
		\begin{bmatrix}
			\beta_1(q_1) & \beta_2(q_1)
		\end{bmatrix}
		=
		G^\top\left[M_a^{-1}(q_1) + M^{-1}(q_1)\right]M(q_1)
	\end{split}
\end{equation}
is introduced. The free function is chosen as $K(q_1) = \frac{1}{\beta_2(q_1)}$, resulting in 
\begin{equation}
	\begin{split}
		\Gamma(q)
		&=
		\int
		\begin{bmatrix}
			\frac{\beta_1(q_1)}{\beta_2(q_1)} & 1
		\end{bmatrix}
		\ dq \\
		&=
		\int \frac{\beta_1(q_1)}{\beta_2(q_1)} \ dq_1 + q_2,
	\end{split}
\end{equation}
which can be solved numerically from the initial condition $\Gamma(0_{2\times 1}) = 0$. The function $V_f(\cdot)$ was taken as $V_f(\Gamma(q)) = \frac12\kappa\Gamma(q)^2$ with $\kappa = 250$ for simulation. A contour plot of the resulting closed-loop potential energy on a log scale is shown in Figure \ref{fig:acro_addedPotEnergyVd}. Note that a minimum has been assigned to $q = 0_{2\times 1}$. As a final control design stage, damping is injected via the new passive input/output pair with
\begin{equation}
	v
	=
	-5G^\top (M_a^{-1} + M^{-1})p.
\end{equation}
The complete control signal is defined by the expression \eqref{controlDef}.
\begin{figure}[h!]
	\centering
	\includegraphics[width=1.0\columnwidth]{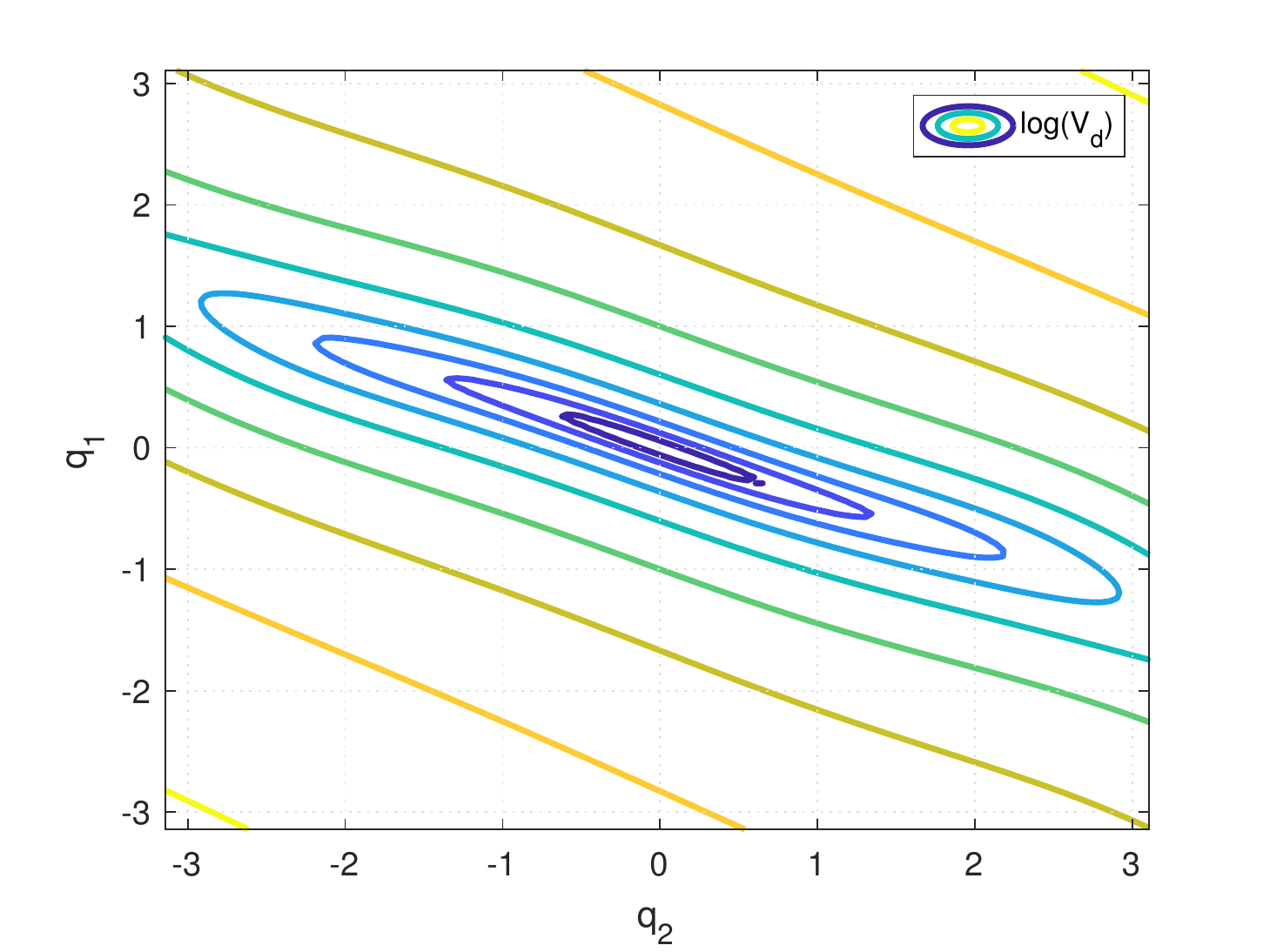}
	\caption{Contour plot of the closed-loop potential energy $V_d(q) = V_m(q) + V_f(\Gamma(q))$ on log scale for the acrobot system.}
	\label{fig:acro_addedPotEnergyVd}
\end{figure}

The acrobot system was simulated for 20 seconds from initial conditions $q(0) = (0, 0.5)$, $p(0) = (0,0)$. The resulting state evolution and closed-loop energy $H_d$ is shown in Figure \ref{fig:acro_simulation}. As expected, the proposed controller stabilises the origin and the closed-loop energy $H_d$ decreases monotonically.
\begin{figure}[h!]
	\centering
	\includegraphics[width=1.0\columnwidth]{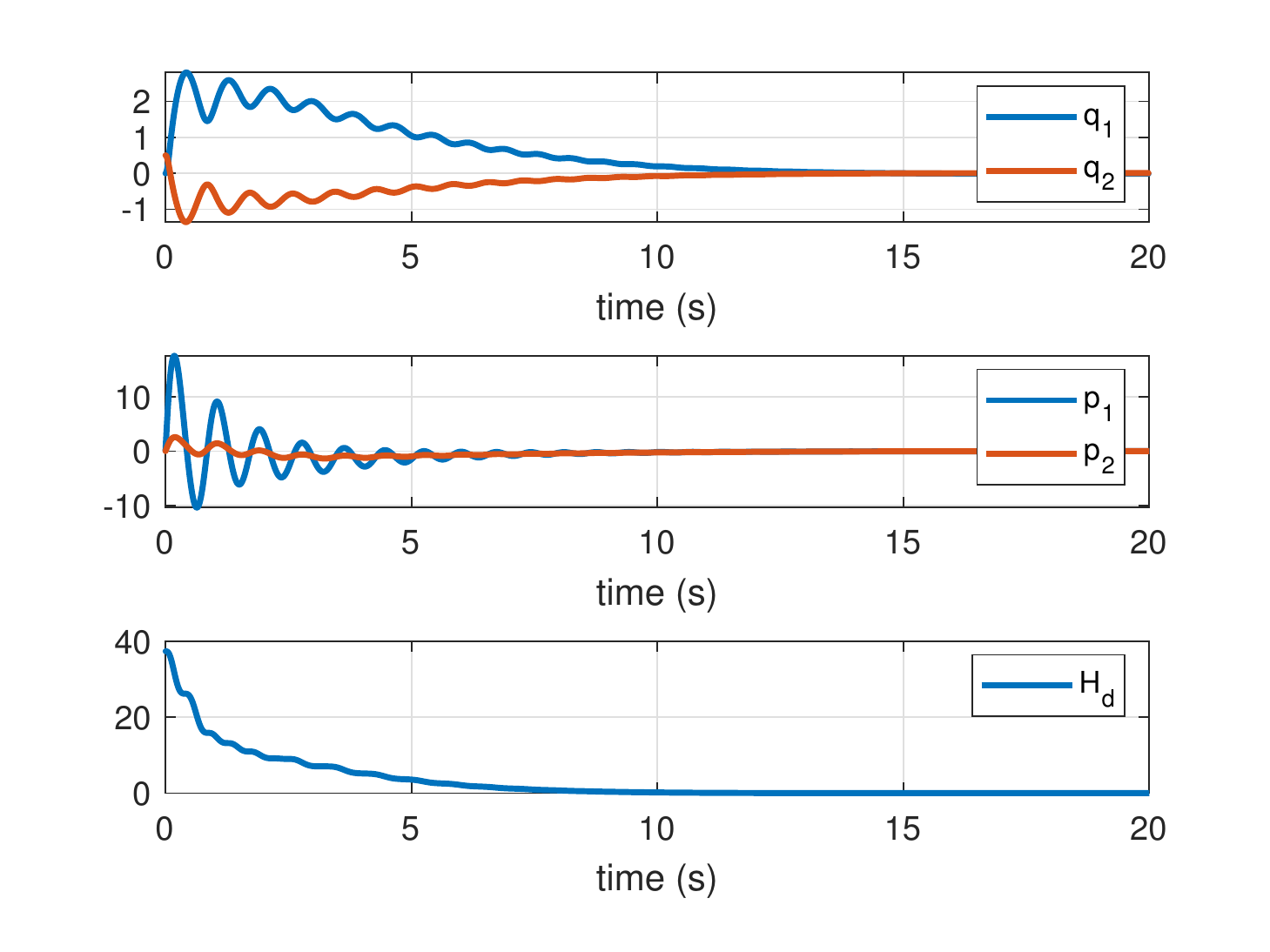}
	\caption{Numerical simulation of acrobot system in closed-loop with CbI scheme.}
	\label{fig:acro_simulation}
\end{figure}

\section{CONCLUSIONS AND FUTURE WORKS}\label{sec:Conclusion}
In this work total energy shaping has been shown to have a CbI interpretation which results in alternate matching equations related to the added inverse mass. These equations were utilised to construct controllers for the cart-pole and acrobot systems, both of which have the property that the mass matrix depends on only one variable, using numerical methods. While the proposed approach is effective, a number of technical aspects of this approach require further investigation. In particular:
\begin{itemize}
\item As detailed in Corollary \ref{corr:KE_ODE}, The kinetic energy matching equations can be posed as ODEs in the special case that the mass matrix depends on only one configuration variable. This property allows the matching equations to be evaluated numerical using ODE solvers. Further investigation into solving the matching equations in the case that the mass matrix is a function of multiple configuration variables is required. In some cases it may be possible to decouple the dependence on each coordinate, recovering equivalent ODEs. Alternatively, the numerical evaluation of the matching PDEs should be investigated.
\item When evaluating the kinetic energy matching equations in \eqref{underactuated1ODE}, the term $m_{a11}(q_i)$ is a free function that can be used to control the resulting added inverse mass. As seen in the cart-pole example of Section \ref{Sec:example:cartPole}, poor choice of this function results in the solution only being defined on a small domain. Conversely, in the acrobot example of Section \ref{Sec:example:cartPole} this term was chosen to ensure a global solution to the matching equations. Choice of this function defines a nonlinear control problem that should be investigated to ensure desirable behaviour of the result.
\item In both examples of Section \ref{sec:examples} the controllers were designed to stabilise the origin of the respective systems. While this was achieved and verified numerically, asymptotic stability was not established. Asymptotic stability requires that the passive output of the closed-loop system is zero-state detectable, a task that is non-trivial for underactuated systems. Further investigation into methods for injecting damping into the unactuated momentum channels of the closed-loop system is required. It is hoped that the CbI interpretation of the controller shown in Figure \ref{underactuatedMassShaping} may provide new insight into how this might be achieved.
\end{itemize}








\bibliographystyle{IEEEtran}

\end{document}